\newif\ifFull
\newcommand{\figurescale}{0.5}
\newcommand{\figurescale}{0.45}
\let\doendproof\endproof
\renewcommand\endproof{~\hfill\qed\doendproof}
\newcommand{\qcount}{\operatorname{\mathsf{count}}}
\newcommand{\qpart}{\operatorname{\mathsf{partition}}}
\newcommand{\starsub}{\operatorname{\mathsf{star}}}
\newcommand{\multsub}{\operatorname{\mathsf{multiply}}}
\newcommand{\inject}{\triangleleft}
\newcommand{\assign}{:=}
\newcommand{\threshold}{\tau}
\newtheorem{subroutine}{Subroutine}
\let\oldsubroutine\subroutine
\renewcommand{\subroutine}{\oldsubroutine\normalfont}
\title{From Discrepancy to Majority}
\author{David Eppstein and Daniel S. Hirschberg}
\institute{Department of Computer Science, University of California, Irvine\thanks{David Eppstein was supported in part by NSF grant  CCF-1228639.}}
\begin{document}
\maketitle

\begin{abstract}
We show how to select an item with the majority color from $n$ two-colored items, given access to the items only through an oracle that returns the discrepancy of subsets of $k$ items. We use $n/\lfloor\tfrac{k}{2}\rfloor+O(k)$ queries, improving a previous method by De Marco and Kranakis that used $n-k+k^2/2$ queries. We also prove a lower bound of $n/(k-1)-O(n^{1/3})$ on the number of queries needed, improving a lower bound of $\lfloor n/k\rfloor$ by De Marco and Kranakis.
\end{abstract}

\section{Introduction}

A large body of theoretical computer science research concerns problems of computing a function using a minimal number of calls to an oracle for another function on small subsets of input values. Such problems include sorting with a minimum number of comparisons, as well as \emph{combinatorial group testing}, in which the goal is to identify the positions of a small set of true values among a larger number of false values using an oracle that returns the disjunction of an arbitrary subset of values~\cite{DuHwa-CGT-00,EppGooHir-SJC-07}. Other problems with this flavor include Valiant's work on computing the majority of $n$ values by shallow circuits of 3-input majority gates~\cite{Val-JA-84} and recent work by the authors  using two-input disjunctions to identify a small number of slackers among a larger number of workers~\cite{EppGooHir-WADS-13}.

De Marco and Kranakis~\cite{DeMKra-DMAA-15} provide another interesting  class of such problems. Their input consists of $n$ items, each having one of two colors. The goal is to select an item of the majority color or, if the input is equally balanced between colors, to report that fact rather than returning an item. The algorithm may only access the input by \emph{counting} queries on $k$-item subsets of the input. If a subset has $b$ black items and $w=k-b$ white items, then the result of the query is $c=\min(b,w)$, the size of the smaller of the two color classes. Equivalently, one may ask for the \emph{discrepancy} $d=|b-w|$ of the query subset; the count can be calculated from the discrepancy or vice versa via the identity $2c+d=k$. The motivating application of De Marco and Kranakis is in fault diagnosis of distributed systems, which requires a majority of processors to be non-faulty. Their queries model tests that examine a small number of processors per test in order to determine whether the fault-free processors are indeed a majority.

The case $k=2$ of this problem had been previously studied~\cite{AloReiSch-IPL-93,AloReiSch-SJC-97,SakWer-Comb-91}, and optimal bounds are known~\cite{AloReiSch-IPL-93,SakWer-Comb-91}.
De Marco and Kranakis~\cite{DeMKra-DMAA-15} provide more general solutions that apply whenever $k$ is sufficiently smaller than~$n$. They show that it is possible to find a majority item for even $k$ using only $n-k+k^2/2$ counting queries,\footnote{There is a bug in their method for odd $k$, in Case 1 of Theorem 4.1, when $i=\lfloor k/2\rfloor$.} and they prove a lower bound of $\lfloor n/k\rfloor$ on the number of queries that are necessary for this problem for all~$k$.

The upper bound of De Marco and Kranakis for counting queries is greater than the lower bound by a factor of~$k$ in its leading term. In this work, we reduce this upper bound by a factor of approximately $k/2$ to $n/\lfloor\tfrac{k}{2}\rfloor+O(k)$, matching the lower bound to within a constant factor independent of~$k$. 

De Marco and Kranakis also considered a more powerful type of query, the \emph{output model}, in which the answer to a query is a partition of the queried set into two monochromatic subsets (not revealing the colors of each subset). For this problem De Marco and Kranakis provided an upper bound of $\lceil (n-1)/(k-1)\rceil$ queries, and showed that the same $\lfloor n/k\rfloor$ lower bound for counting queries applies also to the output model. For odd $k$, we show that their upper bound is tight by proving a matching lower bound. For even $k$, we slightly improve their upper bound and prove a new lower bound that is within an additive $O(n^{1/3})$ lower-order term of the upper bound. Our new lower bounds apply both to counting queries and to the output model, and the $n/(k-1)$ leading terms of the new lower bounds improve the $n/k$ leading term of the previously known bound.

Our results can also be interpreted in the framework of \emph{discrepancy theory}, the study of how small the discrepancy of the sets in a set system can be made by choosing an appropriate 2-coloring of the set elements~\cite{BecChe-08}. The first stage in our counting-query algorithm, finding an unbalanced query, is equivalent to constructing a system of $k$-element sets with discrepancy $>1$, and our results for this stage provide examples of such unbalanced $k$-set systems.

\subsection{Notational conventions and problem statement}

We use the following shorthand notation for sets:
\begin{itemize}
\item $[m]$ denotes the set $\{1,2,\dots, m\}$ of the first $m$ positive integers.
\item If $S$ is a set, $i$ is an element of $S$, and $j$ is not an element of $S$,
then $S_i^j$ denotes the set $(S\setminus\{i\})\cup\{j\}$. That is, we replace $i$ by $j$ in~$S$.
\item With the same conventions, if $A$ is a subset of $S$ and $B$ is disjoint from $S$,
then $S_A^B$ denotes the set $(S\setminus A)\cup B$.
\item If $S$ and $T$ are two sets of numbers with $|S|\ge |T|$, then $S\inject T$ is the set formed from $S$ by removing the $|T\setminus S|$ largest elements of $S\setminus T$ and replacing them by the elements of $T\setminus S$. The result is a set with the same size as $S$ that forms a subset of $S\cup T$ and a superset of $T$.
By abuse of notation, when $t$ is a number, we write $S\inject t$ as a shorthand for $S\inject\{t\}$.
\end{itemize}
To avoid confusion with the equality predicate, we use the notation $x\assign y$ to indicate that a variable $x$ of our algorithm should be assigned the new value $y$.

An instance of the majority problem may be parameterized by two values, $n$ (the number of input items) and $k$ (the size of queries), with $n>k$.
We may represent an input to the problem by an $n$-tuple $X$ of numbers $x_i$ ($i\in[n]$) where each $x_i$ is a member of the set $\{0,1\}$. The argument to a query made by a majority-finding algorithm may be represented by a set $Q\subset [n]$ with $|Q|=k$. Then we may define the results of the input queries $\qcount$ and $\qpart$ as

\begin{align*}
\qcount(Q) &= \min\left\{ \sum_{i\in Q} x_i, \sum_{i\in Q} (1-x_i)\right\}\\
\qpart(Q) &= \bigl\{ \left\{ i \mid x_i = 0 \right\}, \left\{ i \mid x_i = 1 \right\} \bigr\}.
\end{align*}
By extension, we allow these functions to be applied to any set, not necessarily of cardinality~$k$, with the same definitions.

For odd $k$ it will be convenient to partition the set $[n]$ into two complementary subsets, $M$ and~$L$. $M$ is the set of indices $i$ whose associated values $x_i$ equal the majority value of $[k]$. (This may differ from the majority of $[n]$.) Similarly, $L$ is the set of indices $i$ whose associated values $x_i$ equal the minority value in $[k]$.

We say that a query set $Q$ is \emph{homogeneous} if all of its elements have the same value; that is, it is homogeneous when $\qcount(Q)=0$ and when $\qpart(Q)=\{\emptyset,Q\}$. We say that a query is \emph{inhomogeneous} if it is not homogeneous. We say that a query set is \emph{balanced} if it is equally partitioned between elements of the two values (or as near to equal as possible when $k$ is odd). That is, $Q$ is balanced when its discrepancy is at most $1$ or when $\qcount(Q)=\lfloor k/2\rfloor$. We say that $Q$ is \emph{unbalanced} when it is not balanced.

\subsection{New results}

We prove the following new results.
\begin{itemize}
\item A majority element may be found by making $n/\lfloor\tfrac{k}{2}\rfloor+O(k)$ $\qcount$ queries. The best previous bound, by De Marco and Kranakis~\cite{DeMKra-DMAA-15}, was $n-k+k^2/2$.
\item When $n$ is odd, a majority element may be found by making $\lceil (n-2)/(k-1)\rceil$ $\qpart$ queries. This improves for some values of $k$ the best previous upper bound, by De Marco and Kranakis~\cite{DeMKra-DMAA-15}, of $\lceil (n-1)/(k-1)\rceil$.
\item Determining the majority element requires at least $\lceil (n-1)/(k-1)\rceil$ queries, for odd $k$, and $n/(k-1)-O(n^{1/3})$ queries, for even~$k$, regardless of whether the queries are of type $\qcount$ or $\qpart$. The best previous lower bound for both these query types, by De Marco and Kranakis~\cite{DeMKra-DMAA-15}, was $\lfloor n/k\rfloor$.
\end{itemize}
In addition our methods prove the following discrepancy-theoretic result:
\begin{itemize}
\item For even $k$, there exists a family of at most $2\log_2 k+1$ sets, each having $k$ elements, that cannot be $2$-colored to make every set balanced. For odd $k$, there exists a family of at most $k+3\log_2 k+4$ sets with the same property.
\end{itemize}

\section{Upper bounds for counting}

For our new upper bounds for counting we use an algorithm with the following four stages:
\begin{enumerate}
\item Find an unbalanced query $U$.
\item Use $U$ to find a homogeneous query $H$.
\item Use $H$ to determine $\qcount([n])$.
\item Based on the value of $\qcount([n])$, find the result of the majority problem.
\end{enumerate}

We describe these four stages in the following four subsections.

\subsection{Finding an unbalanced query}
Throughout this section, when a subroutine discovers that a set $U$ is unbalanced, we will abort the subroutine and its callers, and pass $U$ on to the next stage of the algorithm. To indicate that this action is not simply returning to the subroutine's caller, we describe it using the Java-like pseudocode ``throw $U$''.

For even $k$, we do not need to find an unbalanced set, as our algorithm for finding a homogeneous set does not require it. However, the solution below serves as a warmup for the odd-$k$ case. It maintains a homogeneous subset $H$ of a balanced set $B$, repeatedly doubling $H$ until it is too large to be a subset of a balanced set. To double $H$, we query a set $B_H^Q$; if it is balanced, then $Q$ and $H$ have the same composition and the doubled set $H\cup Q$ is homogeneous.
\ifFull
\newpage
\fi
\begin{subroutine}
\label{sbr:unbal-even}
to find an unbalanced set when $k$ is even:
\begin{enumerate}
\item Set $B\assign [k]$ and $H\assign\{1\}$.
\item Repeat the following steps:
\begin{enumerate}
\item If $B$ is unbalanced, throw $B$.
\item Let $Q$ be a set disjoint from $B$ with $|Q|=|H|$.
\item If $B_H^Q$ is unbalanced, throw $B_H^Q$.
\item Set $H\assign H\cup Q$ and then set $B\assign B\inject H$.
\end{enumerate}
\end{enumerate}
\end{subroutine}

\ifFull
\begin{lemma}
\autoref{sbr:unbal-even} always throws an unbalanced set after at most $2\log_2 k + 1$ queries.
\end{lemma}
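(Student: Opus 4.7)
The plan is to prove the bound by establishing a loop invariant and bounding the number of iterations. The invariant I will maintain at the start of each iteration is that $H$ is homogeneous, $H \subseteq B$, $|B| = k$, and after $i$ completed iterations $|H| = 2^i$. Initially $H = \{1\}$ and $B = [k]$, so the invariant holds trivially for $i = 0$.

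The core inductive step is to show that if neither step~(a) nor step~(c) throws, then the new $H \cup Q$ is homogeneous. I will work with the signed imbalance $s(S) = \sum_{j \in S}(-1)^{x_j}$, noting that for even $k$ a size-$k$ set is balanced if and only if $s = 0$. Since $B_H^Q = (B \setminus H) \cup Q$ with $H \subseteq B$ and $Q$ disjoint from $B$, additivity of $s$ gives the identity $s(B_H^Q) = s(B) - s(H) + s(Q)$. Because steps~(a) and~(c) do not throw, $s(B) = s(B_H^Q) = 0$, so $s(H) = s(Q)$. Homogeneity of $H$ gives $|s(H)| = |H| = |Q|$, forcing $Q$ to be homogeneous of the same color as $H$. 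Thus $H \cup Q$ is homogeneous, and the update $B \assign B \inject H$ preserves $|B| = k$ and the containment $H \subseteq B$ while doubling $|H|$.

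Termination comes from the observation that once $|H| > k/2$, no balanced size-$k$ set can contain $H$, since a balanced set has exactly $k/2$ elements of each color. Hence the very next iteration's step~(a) must throw $B$. Since $|H|$ doubles from $1$, at most $\lceil \log_2 k \rceil$ doublings suffice. The extremal case is $k = 2^m$, where exactly $m$ full doubling iterations ($2m$ queries) bring $|H|$ to $k$ and are followed by one final throwing query, totalling $2m + 1 = 2\log_2 k + 1$. For $k$ not a power of $2$, strictly fewer doublings suffice and the total stays below $2\log_2 k + 1$.

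The main obstacle I foresee is the key algebraic step: that two zero-discrepancy queries together certify the homogeneity of the new element set $Q$. This crucially uses that for even $k$, balancedness forces $s = 0$ rather than just $|s| \le 1$. The argument therefore does not directly extend to odd $k$, which is presumably why the odd case requires a more elaborate subroutine.
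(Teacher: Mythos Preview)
Your proof is correct and follows essentially the same approach as the paper: maintain that $H$ stays homogeneous and doubles each iteration, then bound the number of iterations until $|H|>k/2$ forces the containing set $B$ to be unbalanced. Your use of the signed imbalance $s(\cdot)$ makes explicit the step the paper states informally (``if $B_H^Q$ is balanced, then $Q$ and $H$ have the same composition''), and your query count matches the paper's $2\log_2 k + 1$ with the same extremal case $k=2^m$.
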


\begin{proof}
\fi
Throughout the loop, $H$ remains homogeneous, and doubles in size at each iteration. The loop terminates on or before the iteration for which $k/2<|H|\le k$,  after at most $2\log_2 k+1$ queries,
because substituting such a large homogeneous set into $B$ will always produce an unbalanced set. Thus, $|H|$ cannot grow larger than $k$ and cause $B_H^Q$ to become undefined.
\ifFull
\end{proof}
\fi
For the subroutine to work correctly, we must have $n\ge 3k/2$ to ensure that  a large enough subset $Q$ disjoint from $B$ can be chosen in step~2(b).

When $k$ is odd we use a similar idea, doubling the size of a small unbalanced seed set until it overwhelms the whole set, but the details are more complicated. In the first place, the seed set for the doubling routine in the even case is always the set $\{1\}$, found without any queries, but in the odd case we choose our seed more carefully to have the form $\{j,j'\}$ where $\{j,j'\}\subset L$.
To construct this seed, we choose $j$ and $j'$ to be arbitrary indexes disjoint from $[k]$ and then verify that they both belong to $L$ by using the following subroutine:

\begin{figure}[t]
\centering
\includegraphics[scale=\figurescale]{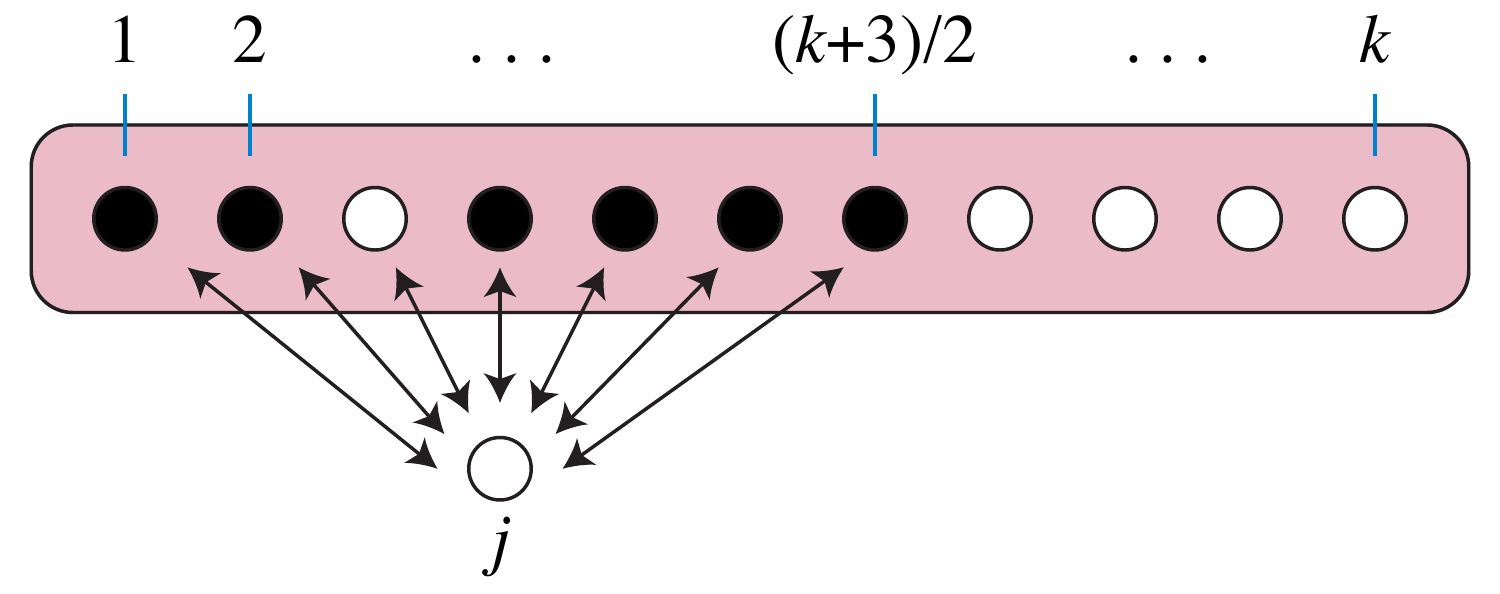}\quad
\includegraphics[scale=\figurescale]{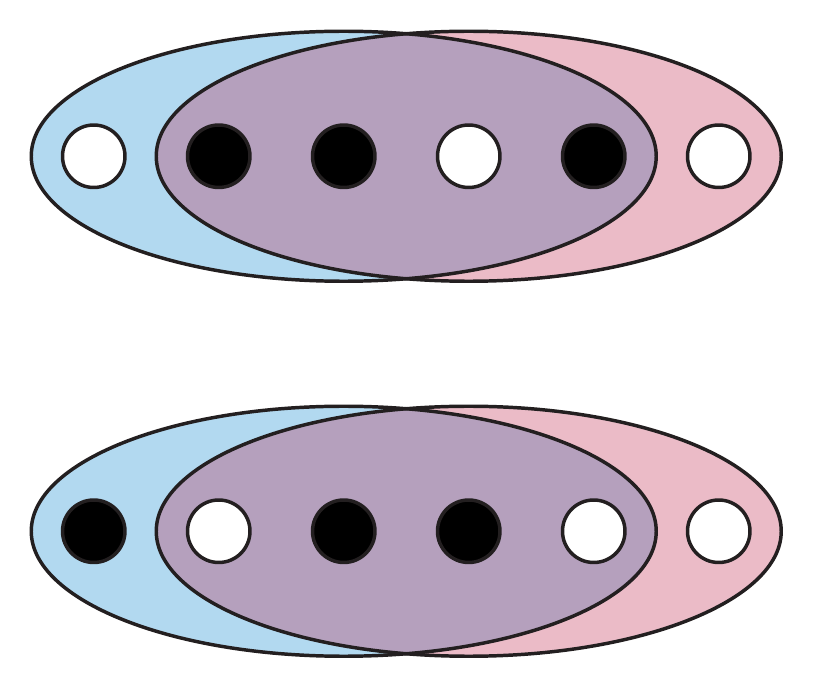}
\caption{Left: the arrows connect pairs of elements swapped into and out of the queries made by $\starsub(j)$. Right: if two overlapping queries (shown as ellipses) differ in a single element, and are both balanced, then either the two swapped elements have equal values (top) or they are unequal but both are in the majority for their query (bottom).}
\label{fig:ss-bs}
\end{figure}

\begin{subroutine} $\starsub(j)$ (for $j>k$) verifies that $j\in L$ or finds an unbalanced set:
\begin{enumerate}
\item If $[k]$ is unbalanced, throw $[k]$.
\item For $i\assign 1,2,\dots (k+3)/2$, if $[k]_i^j$ is unbalanced, throw $[k]_i^j$.
\end{enumerate}
\end{subroutine}
The subroutine name refers to the fact that the pairs $(i,j)$ defining the queries form the edges of a star graph (\autoref{fig:ss-bs}, left).

\begin{lemma}
If $\starsub(j)$ terminates without finding an unbalanced set, then $j\in L$.
\end{lemma}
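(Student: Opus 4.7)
The plan is a contrapositive pigeonhole argument: assuming all queries are balanced, derive that $j$ cannot be in $M$.

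First I would set up the parities. Since $k$ is odd, a balanced query has discrepancy exactly $1$, so $[k]$ contains $(k+1)/2$ elements of $M$ and $(k-1)/2$ elements of $L$ (this is essentially the defining property of $M$ and $L$). The key observation is that the query $[k]_i^j$, obtained by swapping $i$ out and $j$ in, changes these counts by $\pm 1$ in each class depending on which of $M$ and $L$ the swapped elements lie in. A short case analysis gives the four possibilities: if $i$ and $j$ lie in the same class the discrepancy stays at $1$ (balanced); if $i\in M$ and $j\in L$ the discrepancy stays at $1$ (balanced); but if $i\in L$ and $j\in M$ the discrepancy becomes $3$, and $[k]_i^j$ is unbalanced. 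This is the picture illustrated in the right half of \autoref{fig:ss-bs}: in a balanced query, swapping a minority element for a majority element necessarily breaks balance.

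Next I would apply pigeonhole. Suppose for contradiction that $j\in M$ and yet none of the queries $[k]_i^j$ for $i=1,\dots,(k+3)/2$ is unbalanced. By the case analysis, this forces every index $i\in\{1,\dots,(k+3)/2\}$ to lie in $M$. Consequently all $(k-1)/2$ elements of $L\cap[k]$ must be contained in the complementary range $\{(k+3)/2+1,\dots,k\}$, which has only $k-(k+3)/2=(k-3)/2$ elements. Since $(k-1)/2>(k-3)/2$, this is impossible. Therefore $j\notin M$, so $j\in L$, as claimed.

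The only mildly delicate step is checking the case analysis carefully, in particular ruling out the two ``same-class'' swaps and the $M\to L$ swap as sources of imbalance; everything else is just counting. The parameter $(k+3)/2$ in step~2 of the subroutine is precisely tuned to make the pigeonhole argument work, which is what I would emphasize in the write-up.
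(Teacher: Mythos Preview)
Your proof is correct and is essentially the same pigeonhole argument as the paper's. The paper merges your two ``same-class'' cases into the single condition $x_i=x_j$ and phrases the conclusion directly (some tested $i$ must fall into the $i\in M$, $j\in L$ case) rather than by contrapositive, but the counting is identical: at most $(k+1)/2$ indices $i\in[k]$ can satisfy $x_i=x_j$, while $(k+3)/2$ are tested.
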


\begin{proof}
There are two different possible ways that the sets $[k]$ and $[k]_i^j$ queried by the algorithm can both be balanced (\autoref{fig:ss-bs}, right): either $x_i=x_j$ (the upper case in the figure), or $i\in M$ and $j\in L$ (the lower case). The first of these two possibilities, that $x_i=x_j$, can happen only for $\lceil k/2\rceil$ choices of $i$, for otherwise too many of the members of $[k]$ would be equal to $x_j$ (and each other) for $[k]$ to be balanced. However, $\starsub(j)$ tests a larger number of pairs than that. Therefore,
if it tests all of these pairs and fails to find an unbalanced set, then it must be the case that~$j\in L$.
\end{proof}

We define a set $S$ with even cardinality to be \emph{$L$-heavy} if a majority of $S$ belongs to $L$, and \emph{$L$-balanced} if $S$ is either balanced or $L$-heavy. Because we assume $|S|$ is even, an $L$-heavy set must contain at least $1+|S|/2$ elements of $L$, and an $L$-balanced set must contain at least $|S|/2$ elements of~$L$.
\ifFull
\begin{lemma}
The disjoint union of an $L$-heavy and an $L$-balanced set is $L$-heavy.
\end{lemma}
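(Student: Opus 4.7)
The plan is to unpack the definitions of $L$-heavy and $L$-balanced, add the resulting inequalities on the number of $L$-elements contained in each set, and observe that the sum exceeds half the cardinality of the union by at least one.

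Specifically, let $A$ be the $L$-heavy set and $B$ the $L$-balanced set, both of even cardinality. By definition, $|A\cap L|\ge 1+|A|/2$ and $|B\cap L|\ge |B|/2$. Since $A$ and $B$ are disjoint, $|A\cup B|=|A|+|B|$, which is again even, and
\[
|(A\cup B)\cap L|=|A\cap L|+|B\cap L|\ge 1+\tfrac{|A|+|B|}{2}=1+\tfrac{|A\cup B|}{2},
\]
so $A\cup B$ is $L$-heavy.

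There is no real obstacle here; the lemma is essentially a sanity check on the definitions, and the only thing to be careful about is confirming that the union has even cardinality (so that the notions of $L$-heavy and $L$-balanced apply to it), which is immediate from disjointness together with the assumed evenness of $|A|$ and $|B|$.
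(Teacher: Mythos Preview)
Your proof is correct and is essentially identical to the paper's own argument: unpack the definitions, add the two inequalities on the counts of $L$-elements, and conclude. Your explicit remark that the union again has even cardinality is a slight improvement in precision over the paper's version.
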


\begin{proof}
If
\else
The disjoint union of an $L$-heavy and an $L$-balanced set must itself be $L$-heavy,
for if
\fi
$X$ and~$Y$ are disjoint with $X$ containing at least $1+|X|/2$ elements of $L$ and $|Y|$ containing at least $|Y|/2$ elements of $L$, then $X\cup Y$ contains at least $1+|X|/2+|Y|/2=1+|X\cup Y|/2$ elements of $L$.
\ifFull
\end{proof}
\fi
Our algorithm for the odd case of stage~1 depends on the following result, which lets us determine an $L$-heavy set of size double that of a previously known $L$-heavy set using $O(1)$ queries.

\begin{lemma}
\label{lem:heavy-doubling}
Suppose that $S$ and $T$ are sets disjoint from $[k]$, that $S$ is $L$-heavy, that $|S|=|T|\le k$, and that $[k]$, $[k]\inject S$, and $[k]\inject T$ are all balanced. Then $T$ is necessarily $L$-balanced.
\end{lemma}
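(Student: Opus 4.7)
The plan is to track how the hypothesis that $[k]\inject S$ and $[k]\inject T$ are both balanced constrains the numbers $|S\cap M|$ and $|T\cap M|$, and then to use the $L$-heaviness of $S$ to push $|T\cap M|$ below $s/2$. Throughout, let $s=|S|=|T|$; note that by the definition of $L$-heavy, $s$ is even.

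Since $k$ is odd and $[k]$ is balanced, $|[k]\cap M|=(k+1)/2$ and $|[k]\cap L|=(k-1)/2$. Because $S$ is disjoint from $[k]$ and $|S|\le k$, the set $[k]\inject S$ is obtained from $[k]$ by deleting its $s$ largest indices and inserting $S$; let $a$ be the number of $M$-elements among those $s$ deleted indices. The $M$-count of $[k]\inject S$ is then $(k+1)/2-a+|S\cap M|$, and balance (which for odd $k$ means this count lies in $\{(k-1)/2,(k+1)/2\}$) forces $|S\cap M|\in\{a-1,a\}$. Exactly the same deletion and the same $a$ apply to $[k]\inject T$, so the identical argument gives $|T\cap M|\in\{a-1,a\}$; in particular $|T\cap M|\le|S\cap M|+1$.

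Finally, since $S$ is $L$-heavy we have $|S\cap L|\ge 1+s/2$, so $|S\cap M|\le s/2-1$; combined with the previous bound this gives $|T\cap M|\le s/2$, and therefore $|T\cap L|\ge s/2$, which is exactly the definition of $T$ being $L$-balanced. The one mildly subtle point is remembering that for odd $k$ the notion of balance admits two possible values of the majority count, and it is precisely this two-valued slack that allows $|T\cap M|$ to be as much as $1$ larger than $|S\cap M|$; the $L$-heaviness hypothesis on $S$ supplies exactly the one unit of margin needed to absorb this slack, and everything else is bookkeeping with the identities $|M|+|L|=s$ for the sets in question.
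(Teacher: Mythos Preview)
Your proof is correct and follows essentially the same argument as the paper's own proof: both track how balance of $[k]\inject S$ and $[k]\inject T$ constrains the $M$-counts via the common removed set of the $s$ largest elements of $[k]$, and then use the $L$-heaviness of $S$ for the final inequality. The paper explicitly names this removed set $U$ and phrases the chain as ``$U$ is $L$-balanced, hence $T$ is $L$-balanced,'' whereas you work directly with the count $a=|U\cap M|$ to get $|T\cap M|\le |S\cap M|+1$; these are the same computation.
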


\begin{proof}
Let $U$ be the set of the largest $|S|$ elements of $[k]$; this is the subset of $[k]$ removed to make way for $S$ in the set $[k]\inject S$ (\autoref{fig:heavy-doubling}). For $[k]$ and $[k]\inject S$ to be balanced, $U$ can have at most one more member of $M$ than $S$ does; that is, $U$ is $L$-balanced. Again, for $[k]$ and $[k]\inject T$ to be balanced, $T$ must have at least as many members of $L$ as $U$ does; therefore, $T$ is also $L$-balanced.
\end{proof}

Based on \autoref{lem:heavy-doubling}, we define a second subroutine $\multsub(P,m)$ that transforms an $L$-heavy set $P$ into a larger $L$-heavy set of cardinality $m|P|$. 
It takes as input an $L$-heavy set $P$, where $P$ has even size and is disjoint from $[k]$, and a positive integer $m$ with $m|P|\le k$. It either finds an unbalanced set~$U$ (aborting the subroutine) or returns as output an $L$-heavy set of cardinality~$m|P|$. We assume as a precondition for this subroutine that $[k]$ has already been determined to be balanced. The subroutine uses the binary representation of~$m$ to find its return value in a small number of doublings.
\begin{subroutine} $\multsub(P,m)$ (where $m$ and $P$ are as described above) finds an unbalanced set or returns an $L$-heavy set disjoint from $[k]$ of size $m|P|$:
\begin{enumerate}
\item If $m=1$, return $P$.
\item If $[k]\inject P$ is unbalanced, throw $[k]\inject P$ .
\item Choose $Q$ disjoint from both $P$ and $[k]$ with $|Q|=|P|$.
\item If $[k]\inject Q$ is unbalanced, throw $[k]\inject Q$.
\item Set $R\assign\multsub(P\cup Q,\lfloor m/2\rfloor)$.
\item If $m$ is even, return $R$.
\item Choose $S$ disjoint from $R$ and from $[k]$ with $|S|=|P|$.
\item If $[k]\inject S$ is unbalanced, throw $[k]\inject S$.
\item Return $R\cup S$.
\end{enumerate}
\end{subroutine}

\begin{figure}[t]
\centering\includegraphics[scale=\figurescale]{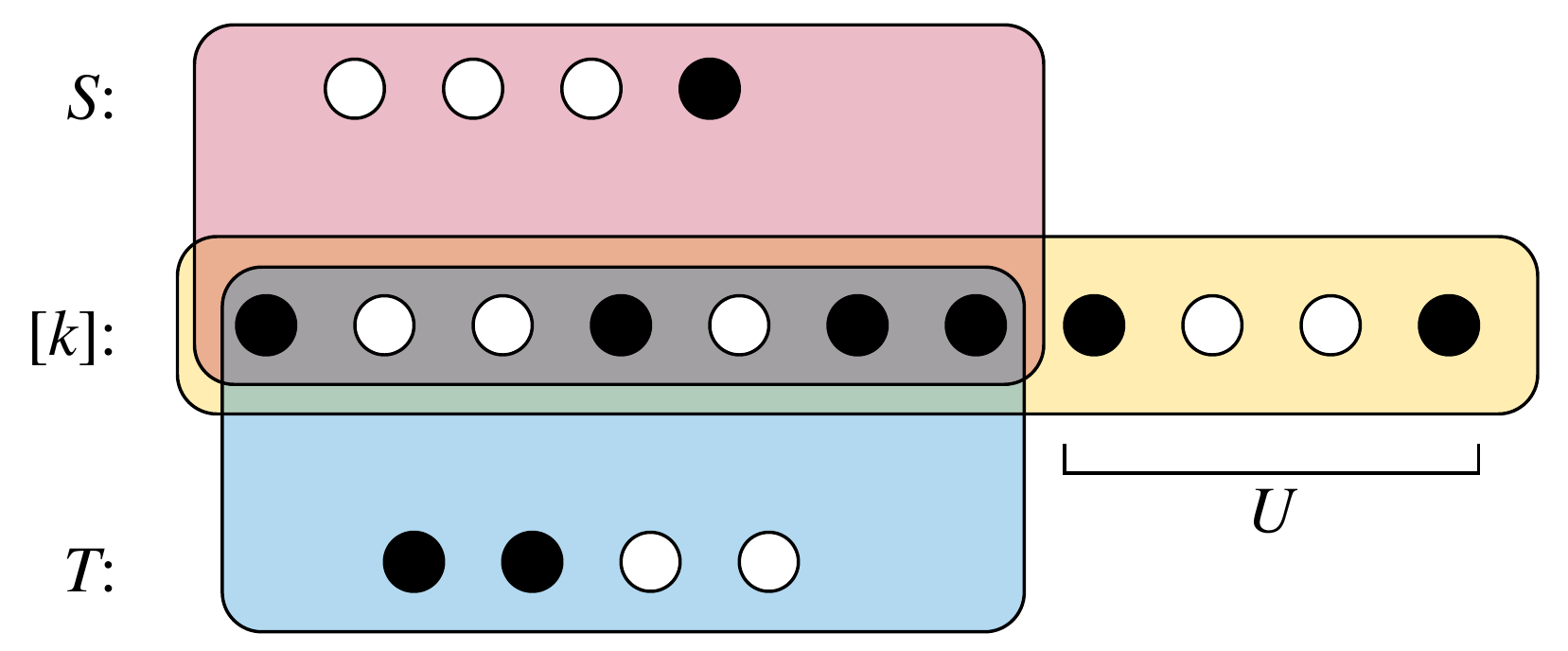}
\caption{The sets $S$ (top), $T$ (bottom), and $U$ (middle right), and the query sets $[k]$ (yellow), $[k]\inject S$ (red), and $[k]\inject T$ (blue), used in the proof of \autoref{lem:heavy-doubling}.}
\label{fig:heavy-doubling}
\end{figure}

\ifFull
\begin{lemma}
If $\multsub(P,m)$ does not throw an unbalanced set, it returns an $L$-heavy set of cardinality $m|P|$. Regardless of whether it throws an unbalanced set or returns an $L$-heavy set, it performs at most $3\log_2 m$ queries.
\end{lemma}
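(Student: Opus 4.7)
The plan is to prove both claims by induction on $m$. The base case $m=1$ is immediate: step~1 returns $P$, which is $L$-heavy of size $1\cdot|P|=m|P|$ by the subroutine's precondition, and zero queries have been made, so $0\le 3\log_2 1$.

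For $m>1$, suppose steps~2 and~4 do not throw, so that $[k]\inject P$ and $[k]\inject Q$ are balanced (the balance of $[k]$ itself is a standing precondition). Since $P$ is $L$-heavy, $|P|\le k$, and $P$ and $Q$ are disjoint from $[k]$, \autoref{lem:heavy-doubling} applied with $(S,T):=(P,Q)$ tells us that $Q$ is $L$-balanced. Consequently $P\cup Q$, the disjoint union of an $L$-heavy set and an $L$-balanced set, is $L$-heavy; it has even cardinality $2|P|$, is disjoint from $[k]$, and satisfies $\lfloor m/2\rfloor\cdot 2|P|\le m|P|\le k$, so the recursive call in step~5 meets its preconditions. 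By the inductive hypothesis, that call either throws an unbalanced set or returns an $L$-heavy set $R$ of cardinality $\lfloor m/2\rfloor\cdot 2|P|$. If $m$ is even, $R$ already has size $m|P|$ and is returned in step~6. If $m$ is odd, step~8 performs one more query; if it does not throw, a second application of \autoref{lem:heavy-doubling}, with the lemma's $S$ and $T$ instantiated as $P$ and the subroutine's $S$ respectively, forces the subroutine's $S$ to be $L$-balanced. Then $R\cup S$ is an $L$-heavy set of cardinality $(m-1)|P|+|P|=m|P|$, which is returned in step~9.

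For the query bound, the outer call makes at most three of its own queries (steps~2, 4, and, when $m$ is odd, 8) before recursing with parameter $\lfloor m/2\rfloor$, which gives the recurrence $T(1)=0$, $T(m)\le 3+T(\lfloor m/2\rfloor)$, solving to $T(m)\le 3\log_2 m$ by a routine induction. An execution that throws early only decreases the total number of queries, so the same bound holds in that case. The only delicate point is verifying the preconditions of \autoref{lem:heavy-doubling} at each invocation---in particular that $|P\cup Q|=2|P|\le k$ at the recursive level and that the freshly chosen $Q$ and $S$ remain disjoint from $[k]$---but these follow from $m|P|\le k$ together with the explicit disjointness requirements built into steps~3 and~7.
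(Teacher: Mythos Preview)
Your proof is correct and follows essentially the same approach as the paper: both argue by induction on~$m$, invoke \autoref{lem:heavy-doubling} to show that $Q$ and the subroutine's~$S$ are $L$-balanced, use the closure of $L$-heaviness under disjoint union with an $L$-balanced set, and bound the queries by $3$ per recursion level over $\lfloor\log_2 m\rfloor$ levels. Your version is simply more explicit about verifying the preconditions at each step.
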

\begin{proof}
\fi
By \autoref{lem:heavy-doubling}, if $\multsub$ does not find an unbalanced set, then $Q$ and $S$ must both be $L$-balanced, and their disjoint union with an $L$-heavy set is another $L$-heavy set. Therefore, the set returned by this subroutine is $L$-heavy, and
(by induction on the number of recursive calls)
has the desired cardinality. The number of levels of recursion (counting only levels that can perform queries) is $\lfloor\log_2 m\rfloor$; at each level it performs either two or three queries, depending on whether $m$ is even or odd. Therefore, in the worst case, it performs at most $3\log_2 m$ queries.
\ifFull
\end{proof}
\fi

Putting $\starsub$ and $\multsub$ together, we have the following algorithm to find an unbalanced set when $k$ is odd. It uses $\starsub$ twice to find a two-element $L$-heavy set~$Y$, then uses $\multsub$ to expand this set to an $L$-heavy set of $k-1$ elements.
If this $L$-heavy set together with one element $i\in [k]$ remains unbalanced, it must be the case that $i\in M$. After we identify two members of $M$, we can replace them with the two known members of $L$ to obtain an unbalanced set.

\begin{subroutine} finds an unbalanced set when $k$ is odd:
\label{sbr:unbal-odd}
\begin{enumerate}
\item Call $\starsub(k+1)$ and $\starsub(k+2)$, and set $Y\assign\{k+1,k+2\}$.
\item Set $Z\assign\multsub(Y,(k-1)/2)$, an $L$-heavy set of $k-1$ elements.
\item If $Z\cup\{1\}$ or $Z\cup\{2\}$ is unbalanced, throw the unbalanced set.
\item Throw $[k]_{\{1,2\}}^Y$.
\end{enumerate}
\end{subroutine}

\ifFull
\begin{lemma}
\autoref{sbr:unbal-odd} always throws an unbalanced set after at most $k+3\log_2 k+3$ queries.
\end{lemma}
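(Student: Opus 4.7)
The plan is to prove correctness (the subroutine never falls off the end without throwing) and then to bound the queries.

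For correctness, I will assume that none of the calls in steps 1--3 throws an unbalanced set and show that step 4 then produces one. By the lemma on $\starsub$, completion of $\starsub(k+1)$ and $\starsub(k+2)$ without a throw yields $k+1,k+2\in L$, so $Y$ consists entirely of $L$-elements and is in particular $L$-heavy; moreover the initial step of $\starsub(k+1)$ certifies that $[k]$ itself is balanced, which is the precondition for calling $\multsub$. Applying the analysis of $\multsub$, the returned set $Z$ is $L$-heavy of cardinality $k-1$, hence contains at least $(k+1)/2$ elements of $L$ and at most $(k-3)/2$ elements of $M$.

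The crux is the analysis of step 3. The sets $Z\cup\{1\}$ and $Z\cup\{2\}$ have odd cardinality $k$, so being balanced means discrepancy exactly $1$. Since $Z$ already contains at least $(k+1)/2$ elements of $L$, augmenting $Z$ by one element can only increase (or preserve) this $L$-count; for the augmented set to have discrepancy $\le 1$, we must have equality, namely $Z$ contains exactly $(k+1)/2$ elements of $L$ and the added element lies in $M$. Failure of step 3 to throw therefore forces $\{1,2\}\subset M$. Combining this with the fact that $[k]$ is balanced (so it has exactly $(k+1)/2$ elements of $M$ and $(k-1)/2$ of $L$ by the definition of $M$), removing $\{1,2\}\subset M$ and substituting $Y\subset L$ yields $[k]_{\{1,2\}}^Y$ with $(k-3)/2$ elements of $M$ and $(k+3)/2$ elements of $L$, i.e., discrepancy $3$. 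Thus step~4 throws an unbalanced set, with no additional query required.

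For the query count, I would sum the contributions stage by stage. $\starsub(k+1)$ makes at most $1+(k+3)/2 = (k+5)/2$ queries. The second call $\starsub(k+2)$ is only reached after the first succeeds, so $[k]$ is already known to be balanced and its step~1 query can be elided, giving at most $(k+3)/2$ additional queries. By the bound on $\multsub$, the call $\multsub(Y,(k-1)/2)$ contributes at most $3\log_2((k-1)/2)\le 3\log_2 k-3$ queries. Step~3 adds two queries, and step~4 adds none. The total is
\[
\tfrac{k+5}{2}+\tfrac{k+3}{2}+(3\log_2 k-3)+2 \;=\; k+3\log_2 k+3,
\]
as claimed.

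The main obstacle is the tight arithmetic at step 3: one must show that if neither $Z\cup\{1\}$ nor $Z\cup\{2\}$ is thrown, then $Z$ is \emph{exactly} $(k+1)/2$ heavy in $L$ and both indices $1,2$ lie in $M$ — only this sharp conclusion makes the subsequent discrepancy-$3$ computation for $[k]_{\{1,2\}}^Y$ go through. A secondary obstacle is noticing the shared $[k]$ query between the two $\starsub$ calls, without which the additive constant would be slightly larger than $3$.
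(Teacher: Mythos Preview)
Your proof is correct and follows essentially the same approach as the paper: the paper's proof gives only the query-count arithmetic (with the correctness argument relegated to the surrounding text), while you have spelled out both the correctness of step~4 and the same query tally, arriving at the identical sum $k+4+3(\log_2 k -1)+2=k+3\log_2 k+3$. One minor remark: for step~4 you need only that $\{1,2\}\subset M$, not the sharper fact that $Z$ has exactly $(k+1)/2$ elements of $L$; the discrepancy-$3$ computation uses only the composition of $[k]$ and $Y$, not of $Z$.
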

\begin{proof}
\fi
The two calls to $\starsub$ (after eliminating the shared query of set $[k]$) take a total of $k+4$ queries. The call to $\multsub$ takes at most $3(\log_2 k - 1)$ queries. The remaining steps of the algorithm use at most two queries. Therefore, the total number of queries made in this stage of the algorithm is at most $k+3\log_2 k+3$.
\ifFull
\end{proof}
\fi
In order to work, this algorithm needs $n$ to be at least $2k-1$ so that it can find enough elements in the disjoint sets that it chooses.

For the algorithms in this stage, the sequence of queries made by the algorithm is non-adaptive: whenever a query finds an unbalanced set, the algorithm terminates, so the sequence of queries can be found by simulating the algorithm using an oracle that knows nothing about the input and always returns a balanced result. Eventually, the algorithm will determine that some particular set is unbalanced without querying it. The sequence of query sets together with the final unqueried and unbalanced set form a family of $k$-sets with the property that, no matter how their elements are colored, at least one set in the family will be unbalanced. This proves the following result:

\begin{theorem}
When $k$ is even, there exists a family of at most $2\log_2 k + 1$ sets, each having $k$ elements, that cannot be $2$-colored to make every set in the family be balanced. When $k$ is odd, there exists a family of at most $k+3\log_2 k+4$ sets with the same property.
\end{theorem}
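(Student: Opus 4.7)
My plan is to derive both families by instrumenting \autoref{sbr:unbal-even} (for even $k$) and \autoref{sbr:unbal-odd} (for odd $k$): simulate each subroutine against a mock oracle that responds ``balanced'' to every query, record the deterministic sequence of queried $k$-sets issued under those responses, and include the specific set $U$ thrown at termination. The claimed family is this collection, and its size matches the theorem's bounds by the query-count analyses already performed for the two subroutines, with one extra set in the odd case coming from the unconditional throw in step~4 of \autoref{sbr:unbal-odd}.

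The first technical step is to argue that the simulation is well-defined, i.e.\ that \autoref{sbr:unbal-even}, $\starsub$, $\multsub$, and \autoref{sbr:unbal-odd} are non-adaptive once every oracle response is pinned to ``balanced''. I would confirm this by direct inspection of the four pseudocodes: the only adaptive clause is the ``throw $U$'' branch, and every other choice --- the auxiliary sets $Q$ and $S$, the iteration index, the recursive structure of $\multsub$ --- is a deterministic function of the already-visited sets, and so can be made by a fixed tie-breaking rule that is independent of the coloring.

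The second step is to identify the terminal forced-unbalanced $U$ in each case. For \autoref{sbr:unbal-even} this is immediate: once the doubled homogeneous $H$ has grown beyond $k/2$ elements, any $k$-set containing $H$ is unbalanced, so the then-current $B$ is thrown. For the odd routine I would invoke the invariants established earlier: in the all-balanced simulation, the two $\starsub$ calls force $\{k+1,k+2\}\subset L$, the $\multsub$ call yields an $L$-heavy $Z$ of size $k-1$, and the balancedness of both $Z\cup\{1\}$ and $Z\cup\{2\}$ forces $\{1,2\}\subset M$ with $|Z\cap L| = (k+1)/2$; then $[k]_{\{1,2\}}^Y$ has discrepancy $3$ in any coloring consistent with the simulation, and is the $U$ thrown in step~4.

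The 2-coloring property follows by a short contrapositive argument: if some coloring of $[n]$ made every set in the family balanced, then running the actual subroutine on that coloring would follow precisely the simulated trajectory and throw the same $U$, contradicting the proof that the subroutines only throw genuinely unbalanced sets. The main obstacle I anticipate is the non-adaptiveness verification of step~1, since several auxiliary sets in the pseudocode are merely required to be ``disjoint from\dots'' without a specified choice rule --- but fixing any canonical tie-breaking (e.g.\ lowest indices first) suffices, and the existence of such disjoint sets is a size assumption already required by the subroutines themselves.
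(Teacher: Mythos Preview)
Your proposal is correct and follows essentially the same approach as the paper: the paper's proof is the single paragraph immediately preceding the theorem, which likewise simulates the subroutines against an always-balanced oracle, notes that the only adaptivity is in the ``throw'' branches so the query sequence is deterministic, and takes the family to be the queried sets together with the final set that the algorithm is forced to throw. Your expansion --- fixing a canonical tie-breaking rule for the auxiliary disjoint sets, explicitly identifying the terminal $U$ in each case, and phrasing the unbalancedness conclusion as a contrapositive --- fills in details the paper leaves implicit, and your accounting of the $+1$ in the odd bound from the unconditional throw in step~4 of \autoref{sbr:unbal-odd} matches the paper's count exactly.
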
 

These bounds are not tight for many values of~$k$. When $k=2\pmod 4$,
three $k$-sets with pairwise intersections of size $k/2$ cannot all be balanced. And for many odd values of~$k$ our bound can be improved by using optimal addition chains. However, such improvements would make our algorithms more complex and would affect only a low-order term of our overall analysis.

\subsection{Finding a homogeneous query}

\begin{figure}[t]
\centering\includegraphics[scale=\figurescale]{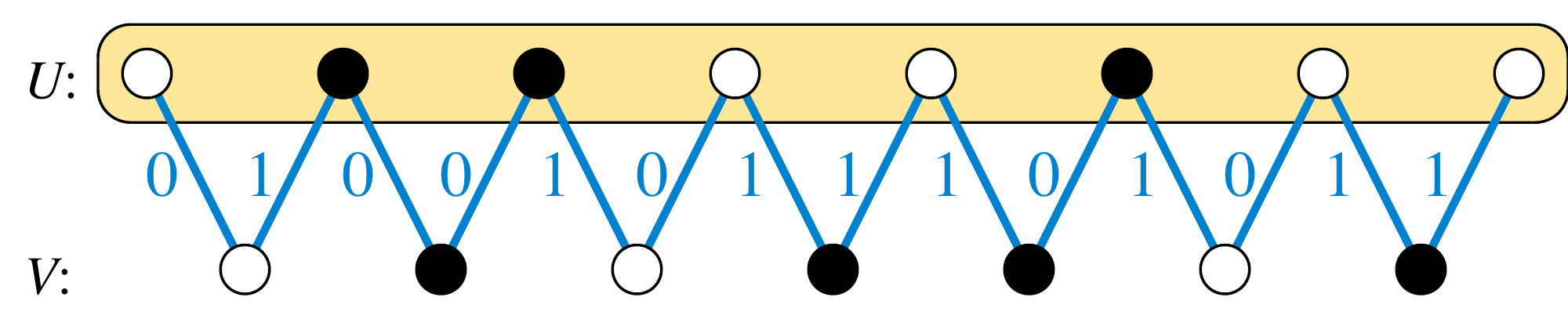}
\caption{Finding a homogeneous query. Given an unbalanced $k$-element query $U$ (top, yellow), we find a disjoint set $V$ of $k-1$ elements (bottom), and construct a spanning tree of the complete bipartite graph that has $U$ and $V$ as its two vertex sets (blue edges). We then query each set $U_i^j$ for each spanning tree edge $ij$ and use the result to label each edge $0$ (if $x_i=x_j$) or $1$ (otherwise). Any two elements of $U\cup V$ have the same value if and only if the spanning tree path connecting them has even label sum.}
\label{fig:homogenize}
\end{figure}

After the previous stage of the algorithm, we have obtained an unbalanced query~$U$. We may also assume that we know the result of the query $\qcount(U)$, for the algorithm of the previous stage will either query this number itself or it will find an unbalanced query $U$ for which $\qcount(U)$ can be determined without making a query. Our algorithm for finding a homogeneous query is based on the principle that, for any two indices $i$ and $j$ with $i\in U$ and $j\notin U$, we can test whether $x_i=x_j$ in a single additional query, by testing whether $\qcount(U_i^j)=\qcount(U)$. If $x_i=x_j$ then the count stays the same, clearly. However, with $U$ unbalanced, it is not possible for the two indices to have different values while preserving the count.

\begin{subroutine} to find a homogeneous query:
\label{sbr:homogeneous}
\begin{enumerate}
\item Let $V$ be a set of $k-1$ elements disjoint from $U$.
\item Construct a spanning tree $T$ of the complete bipartite graph $K_{k,k-1}$ having $U$ and $V$ as the two sides of its bipartition.
\item For each edge $(i,j)$ of $T$, with $i\in U$ and $j\in V$, query $\qcount(U_i^j)$. Label the edge with the number 1 if the query value is different from $\qcount(U)$ and instead label the edge with the number 0 if the two query values are equal.
\item Define two elements of $U\cup V$ to be equivalent when the path connecting them in $T$ has even label sum, and partition $U\cup V$ into the two equivalence classes $X$ and $Y$ of the resulting equivalence relation.
\item Return a subset of $k$ elements from the larger of the two equivalence classes.\end{enumerate}
\end{subroutine}

The algorithm is illustrated in \autoref{fig:homogenize}.
This stage performs $2k-2$ queries and requires that $n\ge 2k-1$.

\subsection{Finding the count}

We next use the known homogeneous query $H$ to compute $\qcount([n])$.

\begin{subroutine} to compute $\qcount([n])$, given a homogeneous set $H$:
\begin{enumerate}
\item Partition $[n]\setminus H$ into $(n-k)/\lfloor \tfrac{k}{2}\rfloor$ subsets $S_1,S_2,\dots$, each having at most $\lfloor \tfrac{k}{2}\rfloor$ elements.
\item For each subset $S_i$ of the partition, query $\qcount(H\inject S_i)$. Since $S_i\le k/2$ and the remaining elements of $H\inject S_i$ are homogeneous, this query determines the number of elements of $S_i$ that are not the same type as~$H$.
\item Let $c$ be the sum of the query values, and return $\min(c,n-c)$.
\end{enumerate}
\end{subroutine}

As well as computing $\qcount([n])$, the same algorithm can determine whether $H$ is in the majority (according to whether $c$ or $n-c$ is the minimum) and, if not, find an inhomogeneous query $I$ for which $|H\cap I|\ge k/2$ (any of the queries with a nonzero query value). The number of queries it needs is
\[
\frac{n-k}{\lfloor k/2\rfloor} \le
\frac{n}{\lfloor k/2\rfloor} -2.
\]

\subsection{Finding the majority}

After the previous three stages of the algorithm, we have the following information:
\begin{itemize}
\item A homogeneous query $H$.
\item The number $\qcount([n])$.
\item Whether the elements of $H$ are in the majority.
\item An inhomogeneous query $I$ (if $H$ is not in the majority), with $|H\cap I|\ge k/2$.
\end{itemize}

If $\qcount([n])=n/2$, we report that there is no majority. If $H$ is a subset of the majority, we may return any element of $H$ as the majority element. In the remaining case, we find an element of $I$ that is not of the same type as the elements of~$H$, using binary search:

\begin{subroutine} uses binary search to find a majority element:
\label{sbr:find-majority}
\begin{enumerate}
\item Let $U\assign I\setminus H$, a set containing an element not the same type as $H$.
\item Let $c\assign\qcount(I)$, the number of majority elements in $U$ already determined in stage three of the algorithm.
\item While $|U|>c$, do the following steps:
\begin{enumerate}
\item Let $V\assign{}$any subset of $\lfloor|U|/2\rfloor$ elements of $U$.
\item Query $\qcount(H\inject V)$.
\item If the result of the query is nonzero, let $U\assign V$ and let $c\assign{}$the query result. Otherwise, let $U\assign U\setminus V$ and leave $c$ unchanged.
\end{enumerate}
\item Return any element of the remaining set $U$. 
\end{enumerate}
\end{subroutine}

\ifFull
\begin{lemma}
\autoref{sbr:find-majority} finds a majority element using at most $\log_2 k$ additional queries.
\end{lemma}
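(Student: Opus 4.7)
The plan is to set up a loop invariant about $U$ and $c$, verify that every query's answer can be interpreted correctly, and bound the iteration count by a geometric argument. First, I would check that on entry $c\ge 1$ and $U$ contains exactly $c$ majority-valued elements: because $H$ is homogeneous of the minority value and $|I\cap H|\ge k/2$, all the majority-valued elements of $I$ lie in $I\setminus H=U$, so $c=\qcount(I)$ is indeed this count; positivity follows from $I$ being inhomogeneous.

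Next, for each loop iteration I would interpret $\qcount(H\inject V)$. Since $|U|\le k/2$ at loop entry and $|U|$ only shrinks, $|V|=\lfloor|U|/2\rfloor\le k/4<k/2$; the $k-|V|$ elements retained from $H$ are all of $H$'s (minority) value, which therefore remains the local majority inside $H\inject V$. Consequently $\qcount(H\inject V)$ equals the number $m$ of majority-valued elements of $V$. The update then preserves the invariant: if $m>0$ we set $U\assign V$ and $c\assign m$, and $V$ still contains $m$ majority elements; if $m=0$ then $V$ is entirely minority-valued, so all $c$ majority elements of the old $U$ lie in $U\setminus V$, and $U\assign U\setminus V$ with $c$ unchanged is correct.

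Finally, because $|U|$ is at least approximately halved at each iteration, after at most $\lceil\log_2(k/2)\rceil\le\log_2 k$ iterations we reach $|U|\le c$, the loop exits, and any element of the remaining $U$ must be majority-valued. Each iteration makes exactly one query, yielding the claimed bound. The main obstacle I anticipate is the interpretation of $\qcount(H\inject V)$: since $\qcount$ reports only the smaller of the two color classes, I must argue that $H$'s value continues to dominate after injection, which ultimately reduces to the size bound $|V|\le k/4$ inherited from the initial $|U|\le k/2$; once that is in hand, the rest is a textbook binary-search accounting.
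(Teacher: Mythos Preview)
Your proposal is correct and follows essentially the same binary-search argument as the paper, which simply observes (by induction) that at most $\lfloor 1+\log_2(|U|-1)\rfloor$ queries suffice and combines this with the initial bound $|U|\le k/2$. You are more explicit than the paper in spelling out the loop invariant and in justifying why $\qcount(H\inject V)$ really returns the number of majority-valued elements of $V$; the paper takes both points for granted.
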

\begin{proof}
\fi
By induction, for a given set $U$, this algorithm uses at most $\lfloor 1+\log_2(|U|-1)\rfloor$ queries. The worst case occurs when $|U|$ is one plus a power of two and the query result is zero, resulting in a case of the same type in the next step. Since initially $|U|\le k/2$, it follows that the total number of queries for this stage of the algorithm is less than $\log_2 k$.
\ifFull
\end{proof}
\fi
This bound can be improved by making a more careful choice of the set~$I$ to ensure that the initial values in the algorithm satisfy $c>|U|/2$, but this improvement is unnecessary for our results.

\subsection{Counting analysis}

By adding together the numbers of queries made in the four stages of our algorithm we obtain the following result.

\begin{theorem}
Let $k$ and $n$ be given integers with $n\ge 2k-1$ and $k>1$. Then it is possible to find a majority element of a set of $n$ $2$-colored elements, or to report that there is no majority, using at most $n/\lfloor\tfrac{k}{2}\rfloor+3k+4\log_2 k$ $\qcount$ queries on subsets of~$k$ elements.
\end{theorem}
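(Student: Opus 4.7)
The proof is essentially an additive accounting of the four stages already analyzed; I would split the sum by the parity of $k$, absorb the smaller constants into the looser expression that the theorem states, and verify that $n\ge 2k-1$ discharges every stage's precondition.

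First I would tabulate the per-stage query bounds established in the preceding subsections. Stage~1 uses at most $2\log_2 k + 1$ queries when $k$ is even (\autoref{sbr:unbal-even}) and at most $k + 3\log_2 k + 3$ queries when $k$ is odd (\autoref{sbr:unbal-odd}). Stage~2 (\autoref{sbr:homogeneous}) uses $2k-2$ queries. Stage~3 uses $(n-k)/\lfloor k/2\rfloor\le n/\lfloor k/2\rfloor - 2$ queries (the slack $-2$ holds for both parities), and as a side effect also delivers $\qcount([n])$ and, in the case that $H$ turns out to be in the minority, an inhomogeneous query $I$ with $|H\cap I|\ge k/2$. Stage~4 (\autoref{sbr:find-majority}) uses at most $\log_2 k$ additional queries.

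Summing, for even $k$ the total is at most $n/\lfloor k/2\rfloor + 2k + 3\log_2 k - 3$, and for odd $k$ it is at most $n/\lfloor k/2\rfloor + 3k + 4\log_2 k - 1$; both are bounded above by the advertised $n/\lfloor k/2\rfloor + 3k + 4\log_2 k$. For preconditions, Stage~1 requires $n\ge 3k/2$ for even $k$ and $n\ge 2k-1$ for odd $k$; Stage~2 requires $n\ge 2k-1$ to pick a set $V$ of $k-1$ indices disjoint from $U$; Stages~3 and~4 only manipulate subsets already contained in $[n]$ and so add no extra requirement. Since $k\ge 2$, the inequality $2k-1\ge 3k/2$ holds, so the single hypothesis $n\ge 2k-1$ simultaneously discharges every stage's requirement.

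There is no substantive obstacle here; the work is routine addition. The one point I would double-check carefully during write-up is the interface between Stages~3 and~4: the $\log_2 k$ bound for Stage~4 relies on the initial $|U|=|I\setminus H|\le k/2$, which in turn depends on Stage~3 producing an inhomogeneous query $I$ with $|H\cap I|\ge k/2$ whenever one is needed. This is guaranteed because the queries made by Stage~3 all have the form $H\inject S_i$ and therefore contain at least $k-\lfloor k/2\rfloor\ge k/2$ elements of $H$, so any such query with nonzero count can serve as the inhomogeneous $I$ required by Stage~4.
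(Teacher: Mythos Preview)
Your proposal is correct and follows exactly the approach the paper itself takes: the paper's ``proof'' is merely the sentence ``By adding together the numbers of queries made in the four stages of our algorithm we obtain the following result,'' and your write-up is a careful, explicit version of that same addition, including the parity split and the verification that $n\ge 2k-1$ covers every precondition.
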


\ifFull
In the next section
\else
In the full version of the paper
\fi
we remove the constraint that $n\ge 2k-1$ by providing substitute algorithms for the case that $k<n<2k-1$, using $O(k)$ queries.

\ifFull
\section{Upper bounds for small $n$}

Our previous algorithm for finding the majority using counting queries requires that $n\ge 2k-1$.
Here we show how to relax that assumption by giving separate algorithms for the remaining possible values of~$n$, using only $O(k)$ queries. We first consider the case that $k+1<n<2k-1$; we will handle the case that $n=k+1$ in a later subsection.
Our  algorithm for $k+1<n<2k-1$ has the following outline:
\begin{enumerate}
\item If $k$ is odd then find an unbalanced query $U$; otherwise set $U = [k]$.
\item Use $U$ to determine the result of the majority problem.
\end{enumerate}
We detail these stages in the following two subsections.

\subsection{Finding an unbalanced query when $k$ is odd}

We begin in a manner similar to what we did for the case in which $n\geq 2k-1$ and $k$ is odd.
As we already observed, the subroutine $\starsub(j)$ used in that case can fail to find an unbalanced set, but if it does then we know that $j\in L$. As in the earlier algorithm, we use the pseudocode ``throw $U$'' to indicate that the unbalanced set $U$ should be passed to the next stage of the algorithm, terminating the subroutine and any of its callers.

\begin{subroutine} to find an unbalanced query for odd $k$ and any $n\geq k+2$:
\begin{enumerate}
\item Call $\starsub(k+1)$ and $\starsub(k+2)$ to verify that $\{k+1,k+2\}\subset L$.
\item Let $Q\assign \{k+1,k+2\}$.
\item Choose $\lfloor k/2\rfloor$ disjoint pairs of elements in $[k]$. For each pair $P$,
      if $[k]_P^Q$ is unbalanced, throw $[k]_P^Q$.
\item Let $m\assign{}$the remaining unpaired element of $[k]$, and let $\{i,j\}$ be any one of the chosen pairs.
\item If $[k]_{\{i,m\}}^Q$ is unbalanced, throw $[k]_{\{i,m\}}^Q$.
\item Throw $[k]_{\{j,m\}}^Q$.
\end{enumerate}
\end{subroutine}

If any one of the sets $[k]_P^Q$ queried by the algorithm is balanced, then $P$ must contain a member of $L$. If all $\lfloor k/2\rfloor$ of these sets are balanced, then (since there are only $\lfloor k/2\rfloor$ members of $L\cap [k]$ that can be included in the pairs) each pair must consist of exactly one member of $L$ and one member of~$M$. The remaining unpaired element $m$ must also belong to $M$. Therefore, one of the two final pairs $\{i,m\}$ or $\{j,m\}$ must be a subset of $M$, and replacing it with $Q$ produces an unbalanced tuple.

This procedure works for all $n$ with $k+1<n<2k-1$ and requires at most $(3k+3)/2$ queries.

\subsection{Partitioning the input using an unbalanced query}

Our algorithm for completing the problem of finding a majority, given an unbalanced query $U$ (when $k$ is odd, or an arbitrary query $U$ when $k$ is even) is very similar to \autoref{sbr:homogeneous} for finding a homogeneous set in the case that $n$ is large.

\begin{subroutine} to find a majority element from an unbalanced query $U$:
\begin{enumerate}
\item Let $V\assign[k]\setminus U$.
\item Construct a spanning tree $T$ on the complete bipartite graph $(U,V,U\times V)$.
\item For each edge $(i,j)$ of $T$, with $i\in U$ and $j\in V$, query $\qcount(U_i^j)$ and label the edge with 0 or 1 as in \autoref{sbr:homogeneous}.
\item Partition $U\cup V$ into two subsets of vertices such that, within each subset, each pair of vertices is connected by a path in $T$ with an even label sum.
\item Return the difference in sizes of the two subsets of vertices. 
\end{enumerate}
\end{subroutine}

After finding a $U$ with a known value of $\qcount(U)$ (unbalanced in the case that $k$ is odd),
this method uses $n-1\le 2k-3$ additional queries.

\ifFull
\else

Our algorithm for completing the problem of finding a majority, given an unbalanced query $U$ (when $k$ is odd, or an arbitrary query $U$ when $k$ is even) uses a similar principle to \autoref{sbr:homogeneous} for finding a homogeneous set in the case that $n$ is large: by swapping one member of $V$ for one member of $U$ we can determine whether the two swapped values are equal or unequal.

\begin{subroutine} to find a majority element from an unbalanced query $U$:
\begin{enumerate}
\item Let $V\assign[k]\setminus U$.
\item Choose an arbitrary element $p$ of $U$.
\item Let $V_0\assign\{q\in Q\mid \qcount(U_p^q)=\qcount(U)\}$ and let $V_1\assign V\setminus V_0$.
\item If $U$ is balanced, then return an element from the larger of the two sets $V_0$ and $V_1$, or return that there is no majority if both sets have equal sizes.
\item If $V\ne V_0$, then:
\begin{enumerate}
\item Let $q$ be an arbitrary element of $V_1$.
\item If $\qcount(U_p^q)>\qcount(U)$, let $r\assign \qcount(U)+|V_1|$. If $r>n/2$, return $q$; if $r=n/2$, return that there is no majority; and if $r<n/2$, return $p$.
\item Let $r\assign \qcount(U)+|V_0|$. If $r<n/2$, return $q$; if $r=n/2$, return that there is no majority; and if $r>n/2$, return $p$.
\end{enumerate}
\item Use the fact that $V$ is homogeneous to solve the remaining problem with fewer queries.
\end{enumerate}
\end{subroutine}

\fi

\subsection{Finding a majority element when $n=k+1$}

When $n=k+1$ there are only $n$ possible queries to make: for each element, there is one query that omits that element. However, we can find a majority element using even fewer queries. The principle our algorithm uses is that, unless $[n]$ is equally split, a query that omits a minority element will have a strictly smaller $\qcount$ than a query that omits a majority element. So to find a majority element, we need only try enough queries to ensure that a majority element will be one of the omitted ones, and compare the query values.

\begin{subroutine} to find a majority element when $n=k+1$:
\begin{enumerate}
\item Let $q\assign \qcount([k])$.
\item For $i\assign 1,2,\dots q+1$, let $q_i = \qcount( [k]_i^n )$.
\item If all $q_i=q$, return that there is no majority
\item Choose $j$ with $q_j\ne q$.
\item If $q_j > q$ then return $j$.
\item If $q_j < q$ then return $n$.
\end{enumerate}
\end{subroutine}

Since $q\le k/2$ and this algorithm makes $q+2$ queries, its worst case number of queries
is $(k+4)/2$.
\fi

\section{Lower bounds}

In contrast to our upper bounds for counting queries, our lower bounds are simpler and tighter in the case that $k$ is odd, so we begin with that case first.

\ifFull
\subsection{Odd query size}
\fi

Our lower bound for odd~$k$ uses $\qpart$ queries, as they are the most powerful and can simulate  $\qcount$ queries: if it is impossible to find the majority using a given number of $\qpart$ queries, it is also impossible with the same number of queries of the other types. We prove our lower bound by an adversary argument: we design an algorithm for answering queries that, unless enough queries are made, will be able to force the querying algorithm into making a wrong choice of answer to the majority problem.

At any point during the interaction of the querying algorithm and adversary, we define the \emph{query graph} to be a bipartite graph that has the $n$ given set elements on one side of its bipartition and the queries made so far on the other side of the bipartition. We make each query be adjacent to the elements in it. As a shorthand, we use the word \emph{component} to refer to a connected component of the query graph. The querying algorithm can be assumed to know the results of applying the $\qpart$ and $\qcount$ functions to any subset of elements within a single component, as those results can be inferred from the queries actually performed within the component. Note also that, if any component $C$ has discrepancy zero, the querying algorithm may safely ignore that component for the rest of the querying process, as removing its elements from the problem will not change the majority.

To simplify the task of the adversary, we restrict the querying algorithm to make only \emph{reasonable queries}, which we define as queries that never include elements from components with zero discrepancy, and that (unless the result of the query leaves at most one nonzero-discrepancy component) never include more than one element from the same pre-query component. It follows from these properties that the querying algorithm must stop making queries, and choose an output for the majority problem, if it ever reaches a state where at most one component has nonzero discrepancy.

\begin{lemma}
Any lower bound for an algorithm that makes only reasonable queries will be valid as a lower bound for all querying algorithms.
\end{lemma}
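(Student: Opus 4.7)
The plan is to show that any querying algorithm $A$ using $q$ queries can be simulated by an algorithm $A'$ that makes at most $q$ queries, all reasonable, and produces the same final answer. Such a simulation will immediately transfer any lower bound proved against reasonable-query algorithms to all querying algorithms.

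$A'$ will run $A$ step by step, forwarding every reasonable query that $A$ issues directly to the oracle. For each unreasonable query $Q$, $A'$ will instead build a reasonable replacement $Q'$ by trading each violating element of $Q$ for a fresh element drawn from outside every zero-discrepancy component and from a distinct pre-query component; such replacements exist because $A'$ would have halted already once at most one nonzero-discrepancy component remained. The crucial observation is that, for $\qpart$ queries, the partition of any component $C$ of the query graph into its two monochromatic classes is already implied by the previous queries made inside $C$, since pairwise equality of values propagates transitively along elements shared by queries within $C$. Using this known partition together with the true oracle answer to $Q'$, $A'$ can reconstruct the partition that $Q$ would have produced, up to an independent relabeling of the two classes of each zero-discrepancy component that contributed elements to $Q$. $A'$ fixes these relabelings arbitrarily and feeds the resulting fabricated answer to $A$ as if it were the real response to~$Q$.

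Correctness of the simulation will follow from a symmetry argument. Each arbitrary relabeling of a zero-discrepancy component $C$ corresponds to an alternative input that is consistent with everything $A'$ has actually queried and that has the same overall majority as the true input, since $C$ contributes equally to both colors. A correct algorithm $A$ must return the correct majority on every input consistent with its observations, so the simulated run returns a correct answer regardless of the fabrication. I expect the main obstacle will be making this symmetry rigorous across the entire transcript: one must verify that the labelings fixed when simulating $Q$ remain consistent with every subsequent reasonable query that $A'$ actually poses to the oracle, so that the full simulated transcript is realizable by a single input with the correct majority. This falls out because later reasonable queries, by definition, avoid the zero-discrepancy components whose class labels have been fixed, so the fixed labelings never conflict with new oracle responses, and the multi-element-per-component case for a nonzero-discrepancy component requires no fabrication at all (the within-component partition already pins down the missing information).
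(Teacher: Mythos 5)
Your simulation argument matches the paper's proof: both transform an arbitrary algorithm $A$ into a reasonable-query algorithm $A'$ by replacing elements from zero-discrepancy or already-represented components with elements from fresh components, and both rely on the fact that within-component partitions are already determined so that $A'$ can reconstruct the answers $A$ would have received. Your added discussion of the relabeling ambiguity for zero-discrepancy components, and of consistency across the transcript, is a more careful treatment of a point the paper's terse proof glosses over (its phrase "components that are supersets of the original ones" elides exactly this issue), so the extra detail is welcome rather than a deviation.
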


\begin{proof}
An arbitrary querying algorithm can be transformed into one that makes only reasonable queries by skipping any query whose elements belong to one component, removing query elements that come from zero-discrepancy components or that duplicate the component of another element, and replacing the removed elements by elements from new components. This modification produces components that are supersets of the original ones, from which the results of the original queries  can be inferred.
\end{proof}

By induction, with only reasonable queries for $k$ odd, if more than one component remains, then all components have odd cardinality and therefore odd discrepancy. We design an adversary that maintains for each odd component a partition of its elements into two subsets (consistent with previous answers) that has discrepancy one. If a query produces a single component of even cardinality, we allow the adversary to choose any partition consistent with previous answers.
If a query merges multiple discrepancy-one components, then (by choosing slightly more than half of the input components to have a majority that coincides with the majority of the merged component, and slightly fewer than half of the input components to have a majority that falls into the minority of the merged component) we can always find a consistent partition with discrepancy one. Therefore, by induction, the adversary can always achieve the goals stated above.

\begin{lemma}
If a querying algorithm that makes reasonable queries does not reduce the input to a single component before producing its output, then the adversary described above can force it to compute an incorrect answer.
\end{lemma}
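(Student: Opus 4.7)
The plan hinges on the fact that a $\qpart$ answer refines each component into its two color classes without labeling them, so the adversary retains exactly one free bit per component. By the invariant established above the lemma, once the algorithm halts with $c\geq 2$ components remaining, every component $C_i$ splits into a larger side $A_i$ of size $(|C_i|+1)/2$ and a smaller side $B_i$ of size $(|C_i|-1)/2$, and these sides are fixed by previous answers. Assigning signs $s=(s_1,\dots,s_c)\in\{\pm 1\}^c$, where $s_i=+1$ means $A_i$ is labeled with color $1$, produces $2^c$ global colorings, every one of them consistent with every query the adversary has answered. The global discrepancy under $s$ is simply $\sum_i s_i$, since $C_i$ contributes $\pm 1$ depending on which side is labeled with the majority color.

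If the algorithm outputs ``no majority,'' I would have the adversary pick every $s_i=+1$, forcing discrepancy $c\geq 2$ and hence a genuine majority, so the output is wrong.

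If the algorithm names a specific element $e$, let $j$ be the component containing $e$; then $e$'s color is pinned down by $s_j$ together with which side of $C_j$ holds $e$, while the remaining $c-1$ signs stay free. For $c$ even I would pick any sign vector with $\sum_i s_i=0$, which is available because $c$ is even, forcing a tie and refuting ``$e$ is a majority.'' For $c$ odd (so $c\geq 3$) I would first fix $s_j$ to select $e$'s color and then set every other $s_i$ to the common value whose contribution points opposite to $e$'s color; since $c-1\geq 2$, those $c-1$ agreeing signs outweigh $s_j$ and $e$ lands in the strict global minority, so again the algorithm is wrong.

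The main subtlety, where I expect to spend the most care, is making sure that the $2^c$ sign vectors really do correspond to colorings consistent with the entire transcript of $\qpart$ answers, not merely with the sizes of the two sides in each component. This reduces to checking that the adversary's maintained invariant—discrepancy-one partitions of each odd component, compatible across every merge performed by reasonable queries—carries over precisely to one free label bit per component with no hidden cross-component constraints, a fact already assembled in the paragraphs preceding the lemma. Once that structural fact is in hand, the counting argument above is essentially mechanical.
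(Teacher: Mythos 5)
Your proof is correct and takes essentially the same approach as the paper's. The paper cases on the parity of the number of remaining discrepancy-one components and then shows both possible outputs can be refuted, while you case on the algorithm's output (``no majority'' vs.\ a named element) and bring in parity only as a sub-case; both arguments rest on exactly the same facts---one free sign per component, global discrepancy equal to the signed sum, and the ability to force a majority, a tie, or any single element into the minority.
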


\begin{proof}
Unless there is one component, more than one answer to the majority problem  is consistent with the choices already made by the adversary.

In particular, if there are evenly many odd components of discrepancy one, then by choosing the majorities of all components to be the majority of the whole input, it is possible to cause the whole input to have a majority. But by choosing half of the components to have a majority of value~0 and half of the components to have a majority of value~1, it is also possible to cause the whole input to be evenly split between the two values and have no majority. Thus, regardless of whether the querying algorithm declares that there is no majority or whether it chooses a majority element, it can be made to be incorrect.

If there are an odd number of odd components, then a majority always exists. We may achieve discrepancy one for the whole input set of elements by choosing slightly more than half of the components to have majority value~1 and slightly fewer than half to have majority value~0; however, each component can be either on the majority~1 or majority~0 side, so each element can be either in the majority or in the minority. Regardless of which element the querying algorithm determines to belong to the majority, it can be made to be incorrect.
\end{proof}

\begin{theorem}
\label{thm:odd-lb}
When $k$ is odd, any algorithm that always correctly finds the majority of $n$ elements by making $\qpart$ or $\qcount$ queries must use at least $\lceil (n-1)/(k-1)\rceil$ queries.
\end{theorem}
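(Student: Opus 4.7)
The plan is to combine the adversary framework already developed with a simple component-counting argument. By the first lemma of this section, it suffices to prove the lower bound for algorithms restricted to reasonable queries. By the second lemma, any such algorithm must, before terminating, reach a state in which the query graph has at most one component of nonzero discrepancy; otherwise the adversary can force a wrong answer. So I just need to lower-bound the number of reasonable queries required to reduce the initial configuration of $n$ singletons to one nonzero-discrepancy component.

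First I would establish, by induction on the number of queries made, that with $k$ odd every component of the query graph has odd cardinality. The base case holds because each element starts as its own singleton component. For the inductive step, any reasonable query (prior to the final one that terminates the algorithm) draws its $k$ elements from $k$ distinct pre-query components; the merged component has cardinality equal to the sum of $k$ odd numbers, which is odd because $k$ is odd. Odd-cardinality components cannot have discrepancy zero, and the adversary of the previous lemma in fact maintains discrepancy exactly one on each such component. Consequently, while the algorithm is running, no component ever drops out of consideration for having zero discrepancy.

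Next I would count: a reasonable query merges at most $k$ pre-query components into one, so it reduces the number of (nonzero-discrepancy) components by at most $k-1$. The initial number of components is $n$, and the algorithm must drive this count down to at most~$1$ by the second lemma. Hence the number of queries is at least $\lceil (n-1)/(k-1)\rceil$. Finally, invoking the first lemma promotes this bound from reasonable querying algorithms to arbitrary ones, and since $\qpart$ queries are at least as powerful as $\qcount$ queries, the bound applies to both query models.

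I do not expect a serious obstacle: the adversary and the parity argument have both been set up above, and the remaining work is the counting step. The only subtlety worth checking is that, in the odd-$k$ case, one cannot prematurely collapse the component count by creating a zero-discrepancy component, and this is precisely ruled out by the parity induction.
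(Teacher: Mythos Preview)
Your proposal is correct and follows essentially the same approach as the paper: reduce to reasonable $\qpart$ queries via the first lemma, use the parity observation (all components have odd cardinality, hence nonzero discrepancy) together with the adversary lemma to conclude that the query graph must be collapsed to a single component, and then count that each query reduces the number of components by at most $k-1$. The paper states the parity induction in the discussion preceding the theorem and keeps the proof itself to the final counting step, but the content is the same as what you have written.
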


\begin{proof}
As above, the algorithm can be assumed to make only reasonable $\qpart$ queries, and must make enough queries to reduce the query graph to a single component. This graph initially has $n$ components, and each query reduces the number of components by at most $k-1$, from which the result follows.
\end{proof}

\ifFull
\subsection{Even query size}
\fi

De Marco and Kranakis showed that the majority problem on $n$ elements may be solved using $\lceil (n-1)/(k-1)\rceil$ $\qpart$ queries on subsets of $k$ elements, matched by the lower bound of \autoref{thm:odd-lb}. For odd $n$, this bound may be improved to $\lceil (n-2)/(k-1)\rceil$ by applying it only to the first $n-1$ elements, and either returning the result (if it is a majority) or the final element (if the first $n-1$ elements have no majority). However, this modification to their algorithm can reduce the number of queries only when $k-1$ evenly divides $n-2$, which only happens when $k$ is even. Therefore, this improvement does not contradict \autoref{thm:odd-lb}.
When $k=2$ a similar improvement can be continued recursively by pairing up elements, eliminating balanced pairs, and recursively finding the majority of a set of representative elements from each pair. The resulting algorithm uses $n-b$ queries, where $b$ is the number of nonzero bits in the binary representation of $n$, and a matching lower bound is known~\cite{SakWer-Comb-91}. Again, this does not contradict \autoref{thm:odd-lb} because $k=2$ is even.
These improvements to the upper bound of De Marco and Kranakis raise the question of whether the majority can be found with significantly fewer queries whenever $k$ is even. However, we show
\ifFull
in this section
\else
in the full version of the paper
\fi
that the answer is no. An adversary strategy similar to the odd-$k$ strategy but more complicated than it can be used to prove a lower bound of $n/(k-1)-O(n^{1/3})$ on the number of queries.
\ifFull

For odd~$k$, we proved our lower bound using an adversary that (for reasonable queries) always chooses a partition of each query that gives the resulting component discrepancy exactly one. For even~$k$, we do not wish to create components of discrepancy zero (because that would allow the querying algorithm to eliminate all the elements of the component from future consideration) but, without creating components of discrepancy zero it is not possible to bound the discrepancy that may be needed. In particular, if the querying algorithm makes queries that have the structure of a complete $k$-ary tree of height~$h$, then the adversary will be forced either to create a component of discrepancy zero or to use discrepancy values as large as $2^h$. And it is not always a good strategy for the adversary to choose a nonzero discrepancy for every query, for in the case that $k=2$ and $n=2^h-1$, a complete binary tree strategy against such an adversary can succeed in answering majority queries with only $(n-1)/2$ queries by creating a homogeneous component that is large enough to overwhelm the remaining unqueried elements.

Instead, we will parameterize our adversary by a threshold value $\threshold$, and have it follow the following strategy for each query. If it is possible to partition the query elements consistently with previous queries so that the resulting component has discrepancy at most $\threshold$, choose the partition that results in as small a nonzero discrepancy as possible. Otherwise, if this is not possible, choose a partition that results in discrepancy zero. The ability to follow this strategy is ensured by the following lemmas. Here, we define the query graph and its components in exactly the same way as in the odd-$k$ lower bound, and (as in that bound) we assume that all queries are reasonable.

\begin{lemma}
Suppose that a given reasonable query combines elements from a collection of components $C_i$($i=1,2,\dots$) that have nonzero discrepancies $d_i$. Then it is possible to answer the query consistently with previous queries, to achieve total discrepancy $\sum \sigma_i d_i$ for any choice of $\sigma_i=\pm 1$.
\end{lemma}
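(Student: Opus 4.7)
The plan is to show that after the prior $\qpart$ queries, each component $C_i$ has been pinned down only as an \emph{unordered} partition $(A_i,B_i)$ of discrepancy $d_i$, so that the adversary still has one free orientation bit per component (``which half of $(A_i,B_i)$ gets called color~$0$''). The new query couples these bits: once an orientation is chosen for each contributing $C_i$, the answer to the query is forced, and every assignment of orientations is realizable. Since the merged component's signed discrepancy telescopes to a signed sum of the $d_i$, every sign pattern $\sigma_i\in\{\pm 1\}$ is achievable.

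First I would set up the inductive invariant carried by the adversary: after every prior query, the commitments on $C_i$ amount to the partition $(A_i,B_i)$ with $||A_i|-|B_i||=d_i$, with no constraint on which side is labelled~$0$. The induction is immediate because $\qpart$ returns an unordered two-set partition, so every past answer within $C_i$ is invariant under the swap $A_i\leftrightarrow B_i$; the reasonable-query restriction moreover guarantees that no prior query spans two distinct components, so the orientation bits of different $C_i$'s are a priori independent of one another.

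Next, because the current query is reasonable, it contains exactly one element $e_i\in C_i$ for each contributing $C_i$. Choosing $\sigma_i\in\{\pm 1\}$ independently for each $i$ declares $e_i$ to lie in the ``$0$'' or ``$1$'' class of the output partition; the committed partition of $C_i$ then determines the color of every other element of $C_i$, and by independence these extensions are mutually compatible with every previous answer. The adversary returns the induced two-set partition of the $k$ query elements. The signed difference (\#$0$-class) $-$ (\#$1$-class) in the merged component equals $\sum_i \sigma_i \delta_i$, where $\delta_i=\pm d_i$ is the signed discrepancy of $C_i$ under the committed partition; hence the merged component's discrepancy equals $|\sum_i \sigma_i \delta_i|$ and ranges over $\{|\sum_i \varepsilon_i d_i| : \varepsilon\in\{\pm 1\}^r\}$, which is the statement (the global sign being irrelevant since $\qpart$ is unordered).

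The main obstacle is the bookkeeping rather than any arithmetic: one must precisely state and maintain the claim that each component carries a genuinely free orientation bit, and that past $\qpart$ answers impose no hidden cross-component constraints. Both facts follow from the conjunction of unordered-partition answers and the reasonable-query hypothesis, so once the invariant is carefully formulated the conclusion is immediate. A minor technical point worth flagging is the exceptional reasonable case in which one query takes several elements from a single $C_i$; there the within-$C_i$ relative partition of those elements is already determined, so the query merely confirms known information and does not disrupt the orientation freedom used above.
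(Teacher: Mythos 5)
Your proof is correct and takes essentially the same approach as the paper: both rely on the observation that each component carries a free orientation bit (because $\qpart$ answers are unordered and components are disjoint), and that the signs $\sigma_i$ can be realized by aligning or opposing each component's majority with that of $C_1$. Your version merely spells out the invariant more explicitly and separately flags the exceptional reasonable-query case; the paper's two-sentence argument is the same idea in compressed form.
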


\begin{proof}
Let $C=\bigcup C_i$ be the component resulting from the query.
If $\sigma_i=\sigma_1$ choose a partition of $C$ in which the majority elements of $C_i$ are on the same side as the majority elements of $C_1$, and if $\sigma_i\ne\sigma_1$ choose a partition of $C$ in which the majority elements of $C_i$ are on the opposite side as the majority elements of $C_1$. Then, answer the query by restricting this partition of $C$ to the query elements.
\end{proof}

\begin{lemma}
Suppose that a given reasonable query combines elements from a collection of components $C_i$ that have nonzero discrepancies $d_i$, and that at least two of the discrepancies $d_i$ are different from each other. Then it is possible to answer the query in such a way that the discrepancy of the resulting component is nonzero and at most $\max d_i$.
\end{lemma}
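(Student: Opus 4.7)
My plan is to reduce the statement to a purely numerical lemma: given positive integers $d_1,\dots,d_m$ that are not all equal, with $M=\max_i d_i$, there exist signs $\sigma_i\in\{\pm1\}$ satisfying $0<\bigl|\sum_i\sigma_i d_i\bigr|\le M$. By the preceding lemma, any such signed sum is realizable as the discrepancy of the merged component, so the numerical lemma implies what we want.

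To prove the numerical lemma I would sort so $d_1\ge d_2\ge\cdots\ge d_m$ and run a greedy sign assignment: set $\sigma_1=+1$ and $S:=d_1$; then for each subsequent $j$ choose $\sigma_j$ opposite to the sign of the current $S$ (arbitrary if $S=0$) and update $S:=S+\sigma_j d_j$. A short induction yields the invariant $|S|\le M$, since $\bigl||S|-d_j\bigr|\le\max(|S|,d_j)\le M$ at every step. If the final $S$ is nonzero we are done.

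Otherwise $S=0$, so the signs assigned by the greedy partition the $d_i$ into two sign classes of equal total weight, and I would repair the assignment by flipping one or two signs. If some $d_j\le M/2$, then flipping $\sigma_j$ alone changes $S$ to $\pm2d_j$, whose magnitude lies in $(0,M]$. Otherwise every $d_i$ lies in $(M/2,M]$; since the two sign classes are both nonempty (as $S=0$ and every $d_i>0$) and the $d_i$ are not all equal, a short contradiction argument produces a cross-side pair $j,k$ with $d_j\ne d_k$: if every cross-side pair had equal $d$, a chain of equalities would collapse all the $d_i$ to a single value. Flipping both $\sigma_j$ and $\sigma_k$ shifts $S$ by $\pm2(d_j-d_k)$; because $d_j$ and $d_k$ are distinct elements of the half-open interval $(M/2,M]$ of length $M/2$, the resulting magnitude $2|d_j-d_k|$ lies strictly between $0$ and $M$.

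The obstacle I expect to be hardest is precisely this all-large regime in which every $d_i$ exceeds $M/2$: there a single-sign flip overshoots the $M$-budget by doubling a large value, so one must exploit the fact that two discrepancies differ via a two-flip swap that cancels most of the perturbation. The case split in the proof exists solely to separate this regime from the easy case where some small $d_j$ admits a safe single-flip modification.
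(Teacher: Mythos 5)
Your proof is correct, but it takes a genuinely different route from the paper's. Both make the same first reduction: by the preceding lemma, any signed sum $\sum_i\sigma_i d_i$ is realizable, so it suffices to prove the purely numerical fact that some choice of signs gives a nonzero sum of magnitude at most $M=\max d_i$. The paper proves this fact by induction on the number of components: setting $d_1=\max d_i$, it splits on whether two of $d_2,\dots,d_m$ differ. If so, it pairs those two with opposite signs, replacing them by a single nonzero discrepancy distinct from $d_1$, and recurses; if not (so $d_2=\dots=d_m<d_1$), it cancels the equal tail in pairs and subtracts any leftover term from $d_1$. Your argument instead assigns signs greedily to keep the running partial sum in $[-M,M]$, then repairs the unlucky $S=0$ case with a one-flip or two-flip correction. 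The paper's recursion is somewhat shorter and never confronts a balanced partial sum, while your analysis is more explicit about where the real obstruction lies: the regime in which every $d_i$ exceeds $M/2$ and the greedy partition lands exactly at zero is precisely where a single sign flip would overshoot $M$, and your two-flip swap across the balanced partition resolves it. Both proofs are complete and correct; they simply organize the casework differently.
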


\begin{proof}
Again, let $C$ be the union of the query components; by the previous lemma we may choose any combination of signs for the discrepancies of these components.
We prove the lemma using induction on the number of components to be combined. We may assume without loss of generality that $d_1=\max d_i$. If some two components different from $C_1$ (without loss of generality $C_2$ and $C_3$) have different discrepancies from each other, then we may combine them with opposite signs to each other (that is, we choose $\sigma_2\ne\sigma_3$), effectively replacing them by a single component with discrepancy $|d_2-d_3|$ which is nonzero and different from~$d_1$; the result follows by induction. If on the other hand all discrepancies other than $d_1$ are smaller than $d_1$ and equal to each other, we may cancel them in pairs (again by choosing values of $\sigma_i$ with opposite signs). If this cancellation leaves no components other than $C_1$, then the discrepancy of $C$ equals $d_1$. If there is a single component~$C_2$ left after this cancellation, with discrepancy $d_2<d_1$, then by choosing opposite signs for $\sigma_1$ and~$\sigma_2$ we may cause the discrepancy of~$C$ to equal $d_1-d_2$.
\end{proof}

\begin{lemma}
Suppose that a given reasonable query combines elements from a collection of components $C_i$ that all have the same nonzero discrepancies $d$. Then if there are an odd number of components in the query, it is possible to answer the query in such a way that the discrepancy of the resulting component is exactly~$d$. If there are an even number of components, it is always possible to achieve discrepancy zero, and it is also always possible to achieve discrepancy $2d$, but it is not possible to achieve any discrepancy between those two values.
\end{lemma}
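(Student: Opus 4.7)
The plan is to reduce everything to the first of the preceding lemmas, which guarantees that for \emph{any} choice of signs $\sigma_i\in\{+1,-1\}$ there is a partition of the query consistent with all previous queries so that the merged component $C=\bigcup_i C_i$ has signed count difference $\sum_i \sigma_i d_i$, and hence discrepancy $|\sum_i \sigma_i d_i|$. Under the present hypothesis that every $d_i$ equals the common value $d$, this simplifies to $d\,|\sum_i \sigma_i|$, so the achievable discrepancies of the merged component are exactly the values $d\cdot s$ where $s$ ranges over the achievable absolute sums $|\sum_i \sigma_i|$ as $\sigma$ varies over $\{\pm 1\}^m$, with $m$ the number of components being merged.

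From there the argument is a parity observation. Flipping any single $\sigma_i$ changes $\sum_i \sigma_i$ by $\pm 2$, so $\sum_i \sigma_i \equiv m \pmod 2$; consequently the possible values of $|\sum_i \sigma_i|$ are exactly $m, m-2, m-4, \dots$, terminating at $1$ when $m$ is odd and at $0$ when $m$ is even. If $m$ is odd, take $(m+1)/2$ plus signs and $(m-1)/2$ minus signs to realise $|\sum_i \sigma_i|=1$ and thus a merged discrepancy of exactly $d$. If $m$ is even, the achievable discrepancies of the merged component are precisely $\{0,2d,4d,\dots,md\}$: the value $0$ arises from a balanced sign assignment, the value $2d$ from $m/2+1$ plus signs against $m/2-1$ minus signs, and the parity obstruction rules out every value strictly between $0$ and $2d$ (since any such value would require $|\sum_i \sigma_i|=1$, which is impossible when $m$ is even).

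The main step that needs care is simply to verify that the sign-assignment lemma can be invoked without further restriction, that is, that every combination $\sigma\in\{\pm 1\}^m$ really is realisable by a partition consistent with previous answers; once that is in hand, the remainder of the argument is the elementary parity calculation just sketched. I expect the write-up to be short, with the only conceptual content being the parity observation that forces the gap between $0$ and $2d$ in the even case.
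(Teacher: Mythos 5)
Your proposal is correct and follows essentially the same route as the paper: both reduce to the earlier sign-choice lemma (any $\sigma\in\{\pm1\}^m$ is realizable) and then argue from the parity of $\sum_i\sigma_i$, choosing a nearly balanced sign assignment for the odd case and for discrepancy zero, and flipping one sign from balance to reach $2d$. The only difference is that you spell out the parity obstruction ($\sum_i\sigma_i\equiv m\pmod 2$) that rules out discrepancies strictly between $0$ and $2d$, which the paper leaves implicit.
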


\begin{proof}
To achieve discrepancy zero or~$d$, we choose signs $\sigma_i$ that are as evenly balanced as possible between $+1$ and $-1$. To achieve discrepancy $2d$, we start with a balanced set of signs and then flip one of them.
\end{proof}

Combining these lemmas, we have the following description of the adversary's behavior.

\begin{lemma}
\label{lem:adv-strat}
For any positive even integer $\threshold$, it is possible for an adversary to answer reasonable partition queries in such a way that:
\begin{itemize}
\item Each query answer is consistent with the previous answers.
\item At all times, each nonzero discrepancy is at most $\threshold$.
\item Whenever a query results in a nonzero discrepancy $d$, then either at least one of the components combined by the query also has discrepancy at least~$d$, or all the pre-query components combined by the query have discrepancy exactly $d/2$.
\item Whenever a query results in a zero discrepancy, then all the pre-query components combined by the query have equal discrepancy that is greater than $\threshold/2$.
\end{itemize}
\end{lemma}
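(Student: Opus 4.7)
The plan is to define the adversary's strategy explicitly and then verify the four itemized properties by induction on the number of queries answered so far. For a reasonable query that merges pre-existing components with nonzero discrepancies $d_1,\dots,d_r$, let $A$ be the set of values $|\sum_i \sigma_i d_i|$ over $\sigma_i\in\{+1,-1\}$; the first of the three preceding helper lemmas shows that every element of $A$ is realized by some partition consistent with all prior answers. Let $d^\star$ denote the smallest strictly positive member of $A$. The adversary's rule is: answer with a partition producing the merged component of discrepancy $d^\star$ if $d^\star\le\threshold$, and otherwise answer with a partition producing discrepancy $0$ (which, as shown below, is available exactly when the nonzero option is rejected).

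Consistency is immediate from the first helper lemma. For the discrepancy bound, I would argue inductively: initially every component is a singleton with discrepancy $1\le\threshold$, and by the rule the adversary never produces a new nonzero discrepancy above $\threshold$, so the invariant persists. The substance of the argument is a three-way case analysis of $d^\star$ using the second and third helper lemmas:
\begin{itemize}
\item If some two of the $d_i$ differ, the second helper lemma gives a nonzero achievable value at most $\max_i d_i$, so $d^\star\le\max_i d_i\le\threshold$.
\item If all $d_i$ equal a common $d$ and $r$ is odd, the third helper lemma puts $d\in A$, so $d^\star\le d\le\threshold$.
\item If all $d_i$ equal $d$ and $r$ is even, the third helper lemma yields $A\setminus\{0\}=\{2d,4d,\dots,rd\}$, so $d^\star=2d$.
\end{itemize}

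The remaining properties follow by reading off the case analysis. In the first two cases some $d_i$ is at least $d^\star$, establishing part~(a) of property~3; in the third case every input discrepancy is $d^\star/2$, establishing part~(b). The adversary outputs zero only when $d^\star>\threshold$, which by the trichotomy can happen only in the third case and only when $d>\threshold/2$, which is exactly property~4; and the third helper lemma confirms $0\in A$ in that case, so the rule is well defined and the discrepancy-bound induction closes.

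The main obstacle is the interlocking nature of these checks: the discrepancy bound, the choice between a nonzero and a zero output, and the structural characterizations in properties~3 and~4 must all be established together because each relies on the same case split. A minor bookkeeping point is the degenerate situation in which a reasonable query draws all of its elements from a single existing component; this is permitted only when the query graph is already essentially reduced, the adversary's answer is forced by prior commitments rather than freshly chosen, and the four listed properties then hold either vacuously or by direct inspection.
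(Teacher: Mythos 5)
Your proposal is correct and takes essentially the same approach the paper intends: the paper offers no explicit proof of this lemma, stating only that it follows by combining the three preceding helper lemmas, and your argument is exactly the natural reconstruction of that reasoning (define the achievable-discrepancy set $A$ via the first helper lemma, split on whether the input discrepancies differ, agree with odd multiplicity, or agree with even multiplicity, and read off properties 2--4 from the three cases). The one place you go slightly beyond the paper's stated helper lemmas -- identifying $A\setminus\{0\}$ as precisely $\{2d,4d,\dots,rd\}$ in the all-equal even case rather than merely noting $2d$ is the least nonzero achievable value -- is a correct and easy strengthening that doesn't change the structure of the argument.
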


We have the following bound on the sizes of the zero-discrepancy components produced by this adversary:

\begin{lemma}
If an adversary follows the strategy described by \autoref{lem:adv-strat}, and a reasonable query creates a component with discrepancy $d$ that is not the single remaining component of nonzero discrepancy, then the number of elements in the component is at least $k^{\log_2 d}=d^{\log_2 k}$.
\end{lemma}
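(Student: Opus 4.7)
The plan is to prove the bound by induction on the sequence of queries, establishing that every component $C$ of nonzero discrepancy $d$ whose creating query leaves at least one other nonzero-discrepancy component alive satisfies $|C|\ge d^{\log_2 k}$. A key preliminary observation is that the number of nonzero-discrepancy components is nonincreasing over time, since each query merges several of them into at most one. Consequently, if the current query leaves $C$ coexisting with another nonzero-discrepancy component, then at every earlier moment — and in particular when each pre-query component $C_i$ (of discrepancy $d_i$) that gets merged into $C$ was itself created — there were at least two nonzero-discrepancy components, so the inductive hypothesis applies to each $C_i$ and yields $|C_i|\ge d_i^{\log_2 k}$.

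The base case $d=1$ is immediate, since any component has at least one element and $1^{\log_2 k}=1$. For the inductive step I would split into the two possibilities given by \autoref{lem:adv-strat}. In case (a), some pre-query component $C_i$ has $d_i\ge d$; because $x\mapsto x^{\log_2 k}$ is nondecreasing on $[1,\infty)$ and $C\supseteq C_i$, we get $|C|\ge |C_i|\ge d_i^{\log_2 k}\ge d^{\log_2 k}$. In case (b), every pre-query component combined by the query has discrepancy exactly $d/2$. The reasonable-query conventions (no element from a zero-discrepancy component, and at most one element from any pre-query component whenever the query does not leave at most one nonzero-discrepancy component) force the $k$ elements of the query to come from $k$ distinct nonzero-discrepancy pre-query components — it is essential here that $C$ is \emph{not} the sole surviving nonzero-discrepancy component. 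Since these $k$ components are pairwise disjoint, we obtain
\[
|C|=\sum_{i=1}^{k}|C_i|\ \ge\ k\cdot(d/2)^{\log_2 k}\ =\ \frac{k}{2^{\log_2 k}}\cdot d^{\log_2 k}\ =\ d^{\log_2 k},
\]
where the final equality uses $2^{\log_2 k}=k$.

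The main obstacle is the bookkeeping around case (b): the target $d^{\log_2 k}$ is exactly what the arithmetic $k\cdot(d/2)^{\log_2 k}$ produces, with no slack, so I must be sure both that all $k$ slots of the query correspond to distinct pre-query components and that each of them already satisfies the inductive bound. This is why the non-terminal hypothesis on $C$ has to be propagated backwards through the induction to each $C_i$; the monotonicity of the count of nonzero-discrepancy components is precisely what makes this propagation go through.
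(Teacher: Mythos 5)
Your proof is correct and takes essentially the same approach as the paper, which builds a complete $k$-ary tree of queries of height $\log_2 d$ via the same (a)/(b) case split from \autoref{lem:adv-strat} and counts its leaves; your induction is a more formal packaging of that recursion, with the same arithmetic $k\cdot(d/2)^{\log_2 k}=d^{\log_2 k}$ at the branching nodes. The observation that the number of nonzero-discrepancy components is nonincreasing — used to propagate the "not the sole surviving component" hypothesis backward so the reasonable-query clause guarantees exactly $k$ disjoint pre-query components at every level — is a detail the paper's terser proof leaves implicit, and you are right that the argument needs it.
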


\begin{proof}
For each query $Q$ of discrepancy $d$, we can find $k$ earlier component queries of discrepancy at least $d/2$, either directly as the ones containing the elements of the component or indirectly as the set of $k$ earlier queries of the component combined by $Q$ that has discrepancy at least $d$.
By continuing recursively, we can find a complete $k$-ary tree of queries, of height $\log_2 d$, whose elements are all combined in query~$Q$. There must be at least one element of $Q$ for each leaf of this tree.
\end{proof}

\begin{corollary}
\label{cor:zero-components}
If an adversary follows the strategy described by \autoref{lem:adv-strat}, and a reasonable query creates a component with discrepancy zero, then that component must contain $\Omega(\threshold^{\log_2 k})$ elements.
\end{corollary}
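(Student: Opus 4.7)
My plan is to combine the last bullet of \autoref{lem:adv-strat} with the preceding lemma that lower-bounds the size of a component by its discrepancy. Suppose a reasonable query $Q$ produces a component of discrepancy zero. The adversary's strategy then tells us that all pre-query components merged by $Q$ share a common discrepancy $d > \threshold/2$; the earlier analysis of combining equal-discrepancy components further forces the number of merged components to be even, hence at least two. Since $\threshold\ge 2$ and discrepancies are integers, we get $d\ge 2$, so no pre-query component is a singleton and each was itself created by some earlier reasonable query.

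Next I would apply the preceding size lemma to each pre-query component $C_i$ to obtain $|C_i|\ge d^{\log_2 k} \ge (\threshold/2)^{\log_2 k}=\threshold^{\log_2 k}/k$. The one precondition to check is that $C_i$, at the moment it was created, was not the only nonzero-discrepancy component in the query graph. This is the step I expect to require the most care, and I would discharge it by a monotonicity observation: a reasonable query only merges existing components and therefore cannot increase the number of nonzero-discrepancy components. Since immediately before $Q$ there are at least two such components (namely the $C_i$ themselves), the same held at every earlier moment, in particular just after each $C_i$ was produced, so the size lemma applies to each $C_i$.

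Summing over the at least two pre-query components yields
\[
|C_Q| \;\ge\; 2\cdot\frac{\threshold^{\log_2 k}}{k} \;=\; \Omega\bigl(\threshold^{\log_2 k}\bigr),
\]
with $k$ treated as a fixed constant, which is exactly the bound claimed by the corollary. Beyond the two cited lemmas the only ingredients needed are the even-parity observation for equal-discrepancy merges and the monotonicity of the nonzero-discrepancy component count, neither of which requires any further calculation.
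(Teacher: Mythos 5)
Your argument is correct and is essentially the proof the paper intends (the paper states the corollary without a written proof, treating it as immediate from the preceding lemma and \autoref{lem:adv-strat}). You combine the last bullet of \autoref{lem:adv-strat} with the size lemma applied to the pre-query components, exactly as the authors implicitly do; your extra care in discharging the ``not the single remaining nonzero-discrepancy component'' hypothesis via monotonicity of the component count, and in ruling out singleton pre-query components via $d>\threshold/2\ge 1$, are valid and fill in details the paper leaves unstated.
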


To complete the lower bound argument, we must also bound the sizes of the components remaining whenever a querying algorithm has compiled enough information to correctly determine the majority.

\begin{lemma}
If an algorithm for finding the majority makes a sequence of queries that leaves a set of components with the property that each nonzero component discrepancy is at most half of the total component discrepancy, then it is impossible for the algorithm to correctly choose a majority element.
\end{lemma}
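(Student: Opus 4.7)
The plan is a direct adversary argument exploiting the colouring freedom that $\qpart$ queries leave open. Since each $\qpart$ answer is invariant under swapping the two colour labels inside any single component, after all queries are finished the adversary still retains, for each component $C_i$, an independent choice of a sign $\sigma_i\in\{\pm 1\}$ specifying which of the two sides of $C_i$ is coloured~$+1$ (the other being coloured~$-1$); every such sign pattern yields a colouring consistent with the entire query history. Writing $d_i$ for the discrepancy of $C_i$ and $D=\sum_i d_i$ for the ``total component discrepancy'' of the statement, the signed discrepancy of the whole input is $\Delta=\sum_i \sigma_i d_i$, and the majority colour, if any, is the sign of $\Delta$.

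I would then split on what the algorithm outputs, assuming $D>0$ (otherwise the correct answer is ``no majority'' and the conclusion of the lemma holds vacuously). If the algorithm declares ``no majority'', the adversary sets $\sigma_i=+1$ for every~$i$, giving $\Delta=D>0$ and a strict majority, contradicting the algorithm. If instead the algorithm returns an element $e$ lying in some component $C_j$, the adversary sets $\sigma_i=+1$ for all $i\ne j$ and then chooses the colouring of $C_j$ so that $e$ itself is coloured~$-1$. The contributions from the components $i\ne j$ sum to $D-d_j$; the contribution of $C_j$ is $\pm d_j$, and equals $-d_j$ in the worst case (precisely when $e$ lies on the larger side of $C_j$). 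Thus
\[
\Delta \;\ge\; (D-d_j) - d_j \;=\; D - 2d_j \;\ge\; 0
\]
by the hypothesis $d_j\le D/2$, so either $\Delta>0$ and $e$ lies in the minority class, or $\Delta=0$ and no majority exists at all; either way the algorithm's answer is wrong.

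I expect no significant obstacle in carrying this out. The one subtle point is justifying that any independent choice of signs $\sigma_i$ really is consistent with the $\qpart$ history, which is immediate from the fact that a $\qpart$ answer is an unlabelled partition and so never distinguishes a colouring from its per-component complement. The heart of the argument is the single inequality $D-2d_j\ge 0$, which is exactly the triangle-inequality-style hypothesis of the lemma and which expresses the fact that the remaining components taken together can always outweigh $C_j$'s contribution to $\Delta$ enough to put the algorithm's chosen element on the wrong side.
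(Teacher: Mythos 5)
Your proof is correct and rests on the same core idea as the paper's: after all $\qpart$ queries, the adversary is free to flip the colouring independently within each component, and the hypothesis $d_j\le D/2$ guarantees the other components can outweigh $C_j$ so that the algorithm's chosen element is forced into the minority (or no majority exists). The one cosmetic difference is that the paper's proof splits into two cases (some $d_j$ exactly $D/2$ versus all $d_j<D/2$), whereas your single inequality $\Delta\ge D-2d_j\ge 0$ subsumes both; your side remark that $D=0$ makes the claim vacuous is slightly imprecise (an algorithm outputting ``no majority'' would then be correct, so one should really assume $D>0$ as the paper does implicitly), but this is harmless since the lemma is only invoked with $D\ge 2\threshold>0$.
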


\begin{proof}
If one component $C_1$ has exactly half of the remaining discrepancy, then by setting all of the remaining component's signs $\sigma_i$ equal to each other, it is possible either to make no majority (when those signs are different from $\sigma_1$) or a majority (when those signs are equal to $\sigma_1$). Thus, regardless of whether the algorithm determines that there is or is not a majority, it may be made incorrect.

If, on the other hand, all components have discrepancy that is less than half of the total, then any element $i$ can be made to be a majority element. If $i$ belongs to any component (with zero or nonzero discrepancy), we choose equal signs for all other nonzero-discrepancy components in order to force a majority to exist, and then choose a sign for the component containing element~$i$ as desired. Thus, regardless of whether the algorithm chooses that no majority exists or chooses a particular element as a representative of the majority, it can again always be made incorrect.
\end{proof}

\begin{corollary}
\label{cor:nonzero-components}
If an adversary follows the strategy described by \autoref{lem:adv-strat}, and a querying algorithm returns its answer after making a sequence of queries that leads to components with total discrepancy at least $2\threshold$, then the adversary can choose a partition of the elements consistent with its previous answers that makes the querying algorithm incorrect.
\end{corollary}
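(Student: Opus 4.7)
The plan is to derive this as a one-step corollary of the lemma immediately preceding it, bridged by the structural invariant from \autoref{lem:adv-strat}. That lemma already says any stopping configuration in which every nonzero component discrepancy is at most half of the total discrepancy forces a wrong answer; the corollary just packages the arithmetic that makes the $2\threshold$ threshold the right cutoff.

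First I would recall the invariant guaranteed by \autoref{lem:adv-strat}: throughout play, every nonzero component discrepancy is at most $\threshold$. Second, the hypothesis of the corollary is that the querying algorithm has already returned an answer in a configuration whose total remaining discrepancy $T$ satisfies $T \ge 2\threshold$. Combining the two, any single surviving component discrepancy $d_i$ satisfies $d_i \le \threshold \le T/2$, so no component carries more than half of the total discrepancy. Third, I would invoke the preceding lemma, which under exactly this hypothesis produces two colorings consistent with the adversary's previous partition answers but disagreeing on the majority question (either on whether a majority exists, or, if one must exist, on which element represents it). Whichever answer the algorithm has returned, the adversary commits to the signs $\sigma_i$ for the as-yet-unfixed components that realize the contradicting coloring, making the algorithm incorrect.

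There is no serious obstacle: the arithmetic $\threshold \le T/2$ is immediate from $T \ge 2\threshold$, and the actual construction of two distinguishing global colorings is already carried out in the proof of the preceding lemma. The only care needed is bookkeeping: making sure that the sum defining ``total discrepancy'' is taken over the components still alive at the stopping time (the adversary's zero-discrepancy components have already been dropped by the reasonable-query reduction), and verifying that the adversary retains the freedom to pick any sign vector $(\sigma_i)$ on these surviving components, which is exactly what the first lemma in the block of sign-choice lemmas guarantees.
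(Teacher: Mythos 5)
Your proof is correct and is precisely the implicit argument the paper intends (the corollary is stated without proof, as an immediate consequence): Lemma~\ref{lem:adv-strat} bounds every nonzero component discrepancy by $\threshold$, the hypothesis gives total discrepancy $T \ge 2\threshold$ so $\threshold \le T/2$, and the preceding lemma then applies directly. No gaps.
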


For large values of $\threshold$, \autoref{cor:zero-components} implies that the components with zero discrepancy are also large, and therefore that there cannot be many such components.
For small values of $\threshold$, \autoref{cor:nonzero-components} implies that (when a querying algorithm is capable of determining a correct answer) the components with nonzero discrepancy have a small sum of discrepancies, and therefore that there cannot be many such components.
By choosing a value of $\threshold$ that balances the numbers of components of both types,
we can force any correct querying algorithm to leave only a small number of components of either type. To do this, it must make a large number of queries.

\begin{theorem}
When $k$ is even, any algorithm that always correctly finds the majority of $n$ elements by making $\qpart$ or $\qcount$ queries must use at least $n/(k-1) - O(n^{1/(1+\log_2 k)})$ queries.
\end{theorem}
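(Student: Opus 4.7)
The plan is to run the adversary of \autoref{lem:adv-strat} with a threshold $\threshold$ chosen at the end, and then upper-bound the number of components of the query graph that survive when a correct algorithm terminates. By the earlier reduction from arbitrary to reasonable querying algorithms, it suffices to prove the lower bound under the assumption that every query is reasonable. Each reasonable query merges $k$ previously distinct components into one, so the component count decreases by exactly $k-1$ per query; an algorithm that starts from $n$ singleton components and halts with $c$ components has therefore made at least $(n-c)/(k-1)$ queries. The entire proof thus reduces to showing that $c = O(n^{1/(1+\log_2 k)})$ against this adversary whenever the algorithm is correct.

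I would split the surviving components into those with zero discrepancy and those with nonzero discrepancy and bound each separately. By \autoref{cor:nonzero-components}, a correct algorithm must leave a state whose total nonzero discrepancy is strictly less than $2\threshold$; since each nonzero-discrepancy component contributes at least $1$ to this total, the number of such components is at most $2\threshold - 1$. By \autoref{cor:zero-components}, each zero-discrepancy component was created by some query, at the moment of which it already contained $\Omega(\threshold^{\log_2 k})$ elements; since reasonable queries never revisit a zero-discrepancy component, every zero-discrepancy component continues to contain that many elements throughout, so there are at most $O(n/\threshold^{\log_2 k})$ of them. Summing the two contributions,
\[
c \;=\; O\!\left(\threshold + \frac{n}{\threshold^{\log_2 k}}\right).
\]

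Finally I would choose $\threshold$ to balance the two terms by solving $\threshold = n / \threshold^{\log_2 k}$, giving $\threshold = \Theta(n^{1/(1+\log_2 k)})$; rounding up to the nearest even integer to meet the parity requirement of \autoref{lem:adv-strat} changes only hidden constants. This choice yields $c = O(n^{1/(1+\log_2 k)})$, and therefore at least $(n-c)/(k-1) = n/(k-1) - O(n^{1/(1+\log_2 k)})$ queries, as claimed.

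The main subtlety I anticipate is the transition from ``size at the moment of creation'' as stated in \autoref{cor:zero-components} to ``size at termination'' as needed in the counting argument; this is handled by the observation that reasonable queries can neither include elements from nor merge with zero-discrepancy components, so the size of each such component is frozen once created and the creation-time lower bound persists. A minor bookkeeping point is that singleton components never touched by any query have discrepancy $1$ and thus fall into the nonzero-discrepancy bucket, which is exactly where we want them: the bound of $2\threshold - 1$ on nonzero-discrepancy components is precisely what forces the algorithm to perform enough queries to absorb the untouched singletons into larger components, and the whole argument hinges on this accounting.
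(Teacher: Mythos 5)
Your proof is correct and takes essentially the same approach as the paper: run the threshold adversary, split the surviving components into zero- and nonzero-discrepancy buckets, bound each via Corollaries \ref{cor:zero-components} and \ref{cor:nonzero-components} respectively, and choose $\threshold=\Theta(n^{1/(1+\log_2 k)})$ to balance the two bounds. The extra observations you add (that zero-discrepancy components are frozen once created, that untouched singletons sit in the nonzero bucket, and that $\threshold$ must be rounded to an even integer) are sound, though note your claim that each reasonable query reduces the component count by ``exactly'' $k-1$ is slightly too strong given the exception in the definition of reasonable queries; ``at most $k-1$,'' as the paper states, is what actually holds and is all the argument needs.
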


\begin{proof}
We set $\threshold=n^{1/(1+\log_2 k)}$, and follow the adversary strategy described above.
Then by \autoref{cor:zero-components} each component with zero discrepancy left at the end of the querying algorithm must have at least $\Omega(\threshold^{\log_2 k})$ vertices, from which it follows that there are at most $n/\Omega(\threshold^{\log_2 k})=O(n^{1/(1+\log_2 k)})$ such components. By \autoref{cor:nonzero-components} the total discrepancy of the nonzero components must be less than $2\threshold$; since each such component has discrepancy at least one, the number of such components is also less than $2\threshold=O(n^{1/(1+\log_2 k)})$.

The query graph initially has $n$ components, and ends with $O(n^{1/(1+\log_2 k)})$ components.
Each query reduces the number of components by at most $k-1$, from which the result follows.
\end{proof}

In particular, for $k\ge 4$ we need at least $n/(k-1)-O(n^{1/3})$ queries. Combining this bound with the known $n-\log_2 n$ lower bound for the $k=2$ case~\cite{SakWer-Comb-91} shows that a lower bound of $n/(k-1)-O(n^{1/3})$ is valid for all~$k$.
\fi

\section{Conclusions}
We have provided new bounds for the majority problem, for $\qcount$ and $\qpart$ queries. For $\qpart$ queries with odd query size, our bounds are tight, and for even query size we achieve a matching leading term in our upper and lower bounds. However,  for $\qcount$ queries, our upper and lower bounds bounds are separated from each other by a factor of two. Reducing this gap remains open.

Recently, Gerbner et al. have given bounds for the majority problem for a different type of query that returns an element of the majority of a three-tuple~\cite{GerKesPal-EC-15}. It would be of interest to extend their results to $k$-tuples as well.

Our work also raises the discrepancy-theoretic question of how many sets are needed in a family of $k$-element sets that cannot be balanced. In this, also, our bounds are not tight and further improvement would be of interest.

{
\ifFull
\raggedright
\bibliographystyle{abuser}
\else
\bibliographystyle{splncs}
\fi
\bibliography{queries}}

\end{document}